\theoremstyle{plain}
\newtheorem{prop}{\protect\propositionname}
\providecommand{\propositionname}{Proposition}
\begin{document}
\title{Hall Transport of Charged Particles in Magnetic Disk Array}
\author{Linlin An}
\affiliation{School of Physics, Hefei University of Technology, Hefei, Anhui 230009,
China}
\author{Peifeng Fan}
\email{corresponding author: pffan@ahu.edu.cn}

\affiliation{School of Physics and Optoelectronic Engineering, Anhui University,
Hefei, Anhui 230601, China}
\begin{abstract}
The concept of periodic structures has driven the development of advanced
materials like photonic and phononic crystals. These metamaterials
typically rely on complex repeating units or meta-atoms, limiting
their adaptability after fabrication. To overcome this limitation,
we introduce the concept of metafields, which are repeating patterns
of local magnetic fields instead of material structures. Unlike metamaterials,
which consist of atoms arranged in structured patterns, metafields
focus on the patterns of fields alone, allowing for dynamic property
adjustments through external electric currents. This study explores
a specific metafield where the repeating pattern is the magnetic disk
(MD), defined by a uniform magnetic field perpendicular to its surface.
By arranging multiple MDs, we form a magnetic disk array (MDA) and
theoretically investigate the charged particle dynamics within this
array. Our analysis highlights Hall transport phenomena, such as Hall
diffusivity, conductivity, and thermal Hall effects. Using complex
variables, we derive the collision integral and Boltzmann equation
for particle distribution, applying perturbation methods and Fourier
analysis to calculate transport coefficients. Simulations reveal a
one-way preferential diffusion at the interface between MDAs with
opposing field directions, where diffusion intensity varies with the
particle movement direction. This highlights metafields' potential
for dynamic particle control applications.
\end{abstract}
\maketitle

\section{introduction}

The concept of crystals, which are defined by the periodic repetition
of atomic or molecular structures, has long been a cornerstone of
material science in condensed matter physics. This principle of repeating
structures has been extended into the design of modern materials,
giving rise to a range of applications. For instance, materials with
periodic variations in permeability and permittivity have enabled
the creation of photonic crystals \citep{Yablonovitch1987,John_1987,Joannopoulos_1997},
which exhibit properties such as negative refraction\citep{Veselago1968,Pendry1999,Shelby_2001,Smith_2000,Sakai2012}
and optical cloaking\citep{Pendry_2006,Cai_2007,Valentine_2009,Lee_2021},
sparking widespread interest in the manipulation of light. Similarly,
periodic elastic structures are the foundation of phononic crystals
\citep{Sigalas_1992}, which offer exciting possibilities for controlling
sound waves \citep{Liu-Science2000,Lu_2009,Pennec_2010}. These materials
fall into a broader class known as metamaterials---engineered materials
that possess extraordinary properties not found in natural substances
\citep{Walser2000,Walser2001}. Metamaterials are typically divided
into two categories: wave metamaterials and diffusion metamaterials
\citep{Zhang_NRP2023,Yang2024}. For example, photonic crystals are
classified as electromagnetic (wave) metamaterials, while phononic
crystals belong to the category of acoustic wave metamaterials. A
defining feature of traditional metamaterials is the use of basic
repeating elements, or \textquotedbl meta-atoms,\textquotedbl{} each
consisting of numerous molecules and atoms.

However, a significant drawback is that once fabricated, these metamaterials
have limited flexibility, as key parameters such as the lattice structure
and material properties are difficult to modify post-production. This
rigidity restricts their ability to adapt to changing external conditions
or demands \citep{Jia_2020,Torres_Huerta_2022}. To overcome this
limitation, we propose a new class of materials called metafields,
where the repeating elements are not material structures but local
magnetic fields. This approach offers a major advantage: the ability
to dynamically adjust the properties of these metafields through the
external control of electric currents. Unlike traditional metamaterials,
metafields can be fine-tuned in real time, providing an unprecedented
level of adaptability. In this study, we focus on a specific type
of meta-atom within a metafield, known as a magnetic disk (MD), which
is defined as a circular region with a uniform magnetic field perpendicular
to its surface. By arranging multiple MDs, we construct a magnetic
disk array (MDA). We theoretically investigate the evolution of charged
particles within the MDA and analyze the resulting transport properties.
In particular, we discuss various Hall transport phenomena, such as
Hall diffusivity, Hall conductivity, and thermal Hall effects, that
arise in this system. In the literature, the term \textquotedbl Hall\textquotedbl{}
is sometimes replaced by \textquotedbl odd\textquotedbl{} to reflect
the antisymmetric nature of certain transport tensors \citep{Fruchart_2023}.
For example, the Hall diffusion tensor exhibits opposing off-diagonal
components \citep{Hargus_2021}, which allows a longitudinal concentration
gradient to induce a transverse particle flux. Such Hall-related transport
behavior has attracted considerable interest \citep{Samajdar_2019,Chen2020,Poggioli_2023,VegaReyes2022},
especially in systems like active chiral fluids \citep{Lou_2022,Banerjee2022},
where phenomena such as Hall viscosity commonly occur.

In our approach, we treat the interaction between particles and MDA
as collisions. We derive the collision integral (or collision operator)
and the corresponding Boltzmann equation to describe the particle
distribution. To simplify this derivation, we employ a complex variable
formulation to represent the two-dimensional motion of particles,
which proves to be more efficient than traditional vector formalism.
This technique also allows for the straightforward derivation of single
and inverse collision processes. Using perturbation methods and Fourier
analysis, we compute the perturbation distribution of particles. Assuming
local equilibrium is characterized by a mono-kinetic distribution,
we calculate key transport coefficients, including the Hall diffusion
tensor, Hall conductivity tensor, and thermal Hall conductivity tensor.
Furthermore, we simulate the Hall diffusion equation to further understand
these transport properties. A particularly interesting finding emerges
when particles evolve in regions where two MDAs with opposing field
directions are present. At the interface between these two MDAs, we
identify a one-way preferential diffusion phenomenon, where diffusion
intensity varies depending on the direction of particle movement.
This highlights the unique transport properties introduced by metafields
and suggests potential for new applications in controlling particles.

The rest of the paper is organized as follows: In Sec.\,\ref{sec:the-2D-montion},
we introduce complex variables to describe the collision process of
a charged particle with a single MD, where we derive the scattering
angle and the inverse collision process. In Sec.\,\ref{sec:Boltzmann-eq-collision-integral},
we formulate the collision integral and the Boltzmann equation. The
linear perturbed distribution function is then derived in Sec.\,\ref{sec:linear solution of Boltzmann equation}.
Finally, in Sec.\,\ref{sec:typical-transport-coefficient}, we calculate
key transport coefficients, including the Hall diffusion tensor, Hall
conductivity tensor, and thermal Hall conductivity tensor.

\section{2D motion of charged particle in a background magnetic field\label{sec:the-2D-montion}}

\subsection{Motion of charged particles in a uniform magnetic field using complex
variables}

In this study, we focus on the motion of a charged particle in a 2D
$xoy$-plane under a background magnetic field, defined as $\boldsymbol{B}(\boldsymbol{r})=B_{z}(x,y)\boldsymbol{e}_{z}$.
Here, $\ensuremath{\boldsymbol{B}(\boldsymbol{r})}$ represents the
magnetic field, and $\boldsymbol{e}_{z}$ is the unit vector along
the $z$-direction. Given that vectors in a 2D plane can be described
by complex numbers, and the infinitesimal gyromotion of a charged
particle in a magnetic field can be easily represented by the multiplication
of complex numbers, we employ the complex representation method to
analyze the particle's motion in magnetic fields. Using this approach,
the particle's position $\boldsymbol{r}=x\boldsymbol{e}_{x}+y\boldsymbol{e}_{y}$
and velocity $\boldsymbol{v}=v_{x}\boldsymbol{e}_{x}+v_{y}\boldsymbol{e}_{y}$
are conveniently represented as $\ensuremath{r=x+iy}$ and $v=v_{x}+iv_{y}$,
respectively. In these expressions, $i$ denotes the unit imaginary
number, while $\ensuremath{\boldsymbol{e}_{x}}$ and $\boldsymbol{e}_{y}$
are the unit vectors along the $x$ and $y$ directions, respectively.
To distinguish the magnitudes $\boldsymbol{r}$ (or $r$) and $\boldsymbol{v}$
(or $v$), we employ the notations $\left|\boldsymbol{r}\right|$
(or $\left|r\right|$) and $\left|\boldsymbol{v}\right|$ (or $\left|v\right|$).
The same convention is applied to other physical quantities as well.

Using this method, the particle\textquoteright s equations of motion,
expressed in vector form as
\begin{equation}
\begin{cases}
\dot{\boldsymbol{r}}\left(t\right)=\boldsymbol{v}\left(t\right),\\
\dot{\boldsymbol{v}}\left(t\right)=\frac{qB_{z}}{mc}\boldsymbol{v}\times\boldsymbol{e}_{z},
\end{cases}\label{eq:Lorentz-force-vector}
\end{equation}
can be transformed into a complex form as
\begin{equation}
\begin{cases}
\dot{r}\left(t\right)=v\left(t\right)\\
\dot{v}\left(t\right)=i\omega_{c}v\left(t\right),
\end{cases}\label{eq:Lorentz-force-complex}
\end{equation}
where $c$ is the speed of light, and $q$, $m$ and $\omega_{c}=-qB_{z}/mc$
are real numbers representing the charge, mass and gyrofrequency of
the charged particle, respectively. Generally, $\omega_{c}$ is a
function of $r$, i.e., $\omega_{c}=\omega_{c}\left(r\right)$. However,
for a charged particle in a uniform magnetic field, $\omega_{c}$
is a constant. With the initial position $r_{0}$ and initial velocity
$v_{0}$ provided, equation (\ref{eq:Lorentz-force-complex}) can
be easily solved. The trajectory is then given by
\begin{equation}
r(t)=r_{0}+r_{c}\left(e^{i\omega_{c}t}-1\right),\label{eq:trajectory}
\end{equation}
where 
\begin{equation}
r_{c}=\frac{v_{0}}{i\omega_{c}}.\label{eq:rc}
\end{equation}
The detailed proofs of Eq.\,(\ref{eq:trajectory}) are shown in Appendix.\,\ref{sec:proofs}.
The the expression $r_{c}e^{i\omega_{c}t}$ in Eq.\,(\ref{eq:trajectory})
represents a rotation of $r_{c}$ by an angle $\omega_{c}t$, where
$\omega_{c}>0$ indicates a counterclockwise rotation and $\omega_{c}<0$
indicates a clockwise rotation. The term $r_{c}\left(e^{i\omega_{c}t}-1\right)$
represents the displacement over the time interval $\Delta t=t$.
Consequently, equation (\ref{eq:trajectory}) describes the gyromotion,
with $|r_{c}|$ being the corresponding gyroradius. The expression
$r_{0}-r_{c}=r(t)-r_{c}e^{i\omega_{c}t}$ denotes the position of
gyrocenter, which remains fixed during the gyromotion.

\subsection{Scattering of a charged particle in MD\label{subsec:Scattering-in-a-MD}}

We start from Eq.\,(\ref{eq:trajectory}) to determine how a charged
particle is scattered by a MD. Assuming $\theta=\omega_{c}t$, the
orbit of the charged particle can be written as
\begin{equation}
r(\theta)=r_{0}+r_{c}\left(e^{i\theta}-1\right).\label{eq:particle-orbit}
\end{equation}
Similarly, by analogy with Eq.\,(\ref{eq:particle-orbit}), the circle
boundary of the MD can be expressed as
\begin{equation}
r(\alpha)=r_{0}+r_{D}\left(e^{i\beta}-1\right),\label{eq:boundary-MD}
\end{equation}
where $\left|r_{D}\right|$ and $\left(r_{0}-r_{D}\right)$ are the
radius and center position of the MD. Here, $\beta$ is the parameter
of the circle boundary of MD. From Eqs.\,(\ref{eq:particle-orbit})
and (\ref{eq:boundary-MD}), we can observe that the initial positions
of the two circles are equal, i.e., $r(\theta)=r(\beta)=r_{0}$ when
$\theta=\beta=0$. Consequently, the incidence point where the particle
enters the magnetic field region is located at $r_{0}$. To determine
the scattering law, we only need to know the exit position or $e^{i\theta}$
(or $e^{i\beta}$) at the exit position. We will refer to $\theta$
as the scattering angle.

Combing Eq.\,(\ref{eq:particle-orbit}) and Eq.\,(\ref{eq:boundary-MD}),
we can obtain
\begin{equation}
\frac{e^{i\beta}-1}{e^{i\theta}-1}=z,\label{eq:basic formular}
\end{equation}
where $z$ is a complex number defined by
\begin{equation}
z\equiv\frac{r_{c}}{r_{D}}=\gamma e^{i\varphi},\label{eq:z=00003Dgamma-e^phi}
\end{equation}
with 
\begin{equation}
\gamma\equiv\frac{|r_{c}|}{|r_{D}|},\label{eq:gamma}
\end{equation}
and $\varphi$ being the angle through which $r_{D}$ rotates counterclockwise
to align with $r_{c}$. Using Eq.\,(\ref{eq:basic formular}), we
can derive

\begin{align}
 & e^{i\beta}=\frac{1-z}{1-z^{*}}=\frac{1-\gamma e^{i\varphi}}{1-\gamma e^{-i\varphi}},\label{eq:e^alpha}\\
 & e^{i\theta}=\frac{1-z^{-1}}{1-z^{*-1}}=\frac{\gamma-e^{-i\varphi}}{\gamma-e^{i\varphi}},\label{eq:e^theta}
\end{align}
after some algebraic manipulation. The detailed proofs of Eqs.\,(\ref{eq:e^alpha})-(\ref{eq:e^theta})
are provided in Appendix.\,\ref{sec:proofs}). Substituting Eq.\,(\ref{eq:z=00003Dgamma-e^phi})
into Eq.\,(\ref{eq:e^theta}), the term $e^{i\theta}$ can be rewritten
in another form as 
\begin{equation}
e^{i\theta}=\frac{\left(\gamma^{2}-2\gamma\mathrm{cos}\varphi+\mathrm{cos}2\varphi\right)+i\left(2\gamma\mathrm{sin}\varphi-\mathrm{sin}2\varphi\right)}{\gamma^{2}-2\gamma\mathrm{cos}\varphi+1}.\label{eq:e^theta-phi-split}
\end{equation}
Using Euler's formula, $e^{i\theta}=\mathrm{cos}\theta+i\mathrm{sin}\theta$,
equation (\ref{eq:e^theta-phi-split}) can be equivalently split into
two equations
\begin{align}
 & \mathrm{cos}\theta=\frac{\gamma^{2}-2\gamma\mathrm{cos}\varphi+\mathrm{cos}2\varphi}{\gamma^{2}-2\gamma\mathrm{cos}\varphi+1},\label{eq:cos(theta)-phi}\\
 & \mathrm{sin}\theta=\frac{2\gamma\mathrm{sin}\varphi-\mathrm{sin}2\varphi}{\gamma^{2}-2\gamma\mathrm{cos}\varphi+1}.\label{eq:sin(theta)-phi}
\end{align}
Equations (\ref{eq:e^theta-phi-split})-(\ref{eq:sin(theta)-phi})
indicate that the scattering angle $\theta$ is only determined by
$z\equiv r_{c}/r_{D}$ or equivalently by the ratio $\gamma=|r_{c}|/|r_{D}|$
and the angle $\varphi$.

Suppose the initial velocity $v_{0}$ is along the positive direction
of the $x$-axis, i.e., $v_{0}=\left|v_{0}\right|$. Let 
\begin{equation}
r_{D}\equiv a+ib\label{eq:rM}
\end{equation}
 and using Eq.\,(\ref{eq:z=00003Dgamma-e^phi}), we obtain
\begin{align}
e^{i\varphi} & =\gamma^{-1}\frac{r_{c}}{r_{D}}=\frac{r_{c}/\left|r_{c}\right|}{r_{D}/\left|r_{D}\right|}=\frac{v_{0}/i\omega_{c}}{\left|v_{0}/i\omega_{c}\right|}\frac{r_{D}^{*}}{\left|r_{D}\right|}\nonumber \\
 & =\frac{\left|\omega_{c}\right|}{\omega_{c}}\frac{-ir_{D}^{*}}{\left|r_{D}\right|}=\mathrm{sgn}\left(\omega_{c}\right)\frac{-b-ia}{\left|r_{D}\right|},\label{eq:e^phi}
\end{align}
where we used the definition of $r_{D}$ (see Eq.\,(\ref{eq:rc})).
Here $\mathrm{sgn\left(\omega_{c}\right)}$ is the sign function about
$\omega_{c}$ which is defined by 
\begin{equation}
\mathrm{sgn}\left(\omega_{c}\right)=\begin{cases}
+1 & \omega_{c}>0,\\
-1 & \omega_{c}<0.
\end{cases}\label{eq:sgn(wc)}
\end{equation}
Applying Euler\textquoteright s formula again, equation (\ref{eq:e^phi})
can be rewritten as 
\begin{align}
 & \mathrm{cos}\varphi=-\mathrm{sgn}\left(\omega_{c}\right)\frac{b}{\left|r_{D}\right|},\label{eq:cos(phi)}\\
 & \mathrm{sin}\varphi=-\mathrm{sgn}\left(\omega_{c}\right)\frac{a}{\left|r_{D}\right|}.\label{eq:sin(phi)}
\end{align}

\subsection{The inverse collision in a MD \label{subsec:inverse-collision}}

In this study, we consider a charged particle crossing the region
of a MD as a 2D binary collision process. In the collision process,
the MD remains stationary, and the particle's momentum is not conserved;
however, its energy is conserved. To derive the collision integral
of the Boltzmann equation (see Sec.\,\ref{sec:Boltzmann-eq-collision-integral}),
we need to determine the inverse of the collision process. Specifically,
we need to know the velocity at which the particle must enter the
MD so that it can escape the region with a given initial velocity.
To clearly convey the fundamental physical process, we represent the
collision as
\begin{equation}
\left(\text{intial state};\mathrm{parameters}\right)\rightarrow\left(\text{final state};\mathrm{parameters}\right),
\end{equation}
where the state refers to the position and velocity of a particle,
i.e., a state is a 2-tuple $\left(r,v\right)$, and the parameters
are the invariants of the scattered particle and target object, such
as mass, charge, field strength, etc. We use the right arrow $"\rightarrow"$
to denote a possible process, and $"\nrightarrow"$ to indicate a
forbidden process.

We now define several basic discrete transformations of the collision
process. The time reversal transformation is defined as
\begin{equation}
T\left[\left(r_{A},v_{A}\right)\rightarrow\left(r_{B},v_{B}\right)\right]=\left(r_{B},-v_{B}\right)\rightarrow\left(r_{A},-v_{A}\right),\label{eq:time-reversal}
\end{equation}
where we omit the parameters for simplicity. From Eq.\,(\ref{eq:time-reversal}),
we can see that time reversal implies a reversed collision process;
that is, if we reverse the direction of the final velocity, the particle
will retrace its path and return to its original position. The parity
transformation is defined as
\begin{equation}
P\left[\left(r_{A},v_{A}\right)\rightarrow\left(r_{B},v_{B}\right)\right]=\left(-r_{A},-v_{A}\right)\rightarrow\left(-r_{B},-v_{B}\right),\label{eq:parity}
\end{equation}
which represents a centrosymmetric collision process. The mirror transformation
of the collision process about the $x_{i}$-axis, where $x_{i}=x,y$,
is collision process in a mirror perpendicular to $x_{i}$-axis. This
transformation is denoted by $P_{x_{i}}$ and is defined as follows:
\begin{align}
 & P_{x}\left[\left(r_{A},v_{A}\right)\rightarrow\left(r_{B},v_{B}\right)\right]=\left(-r_{A}^{*},-v_{A}^{*}\right)\rightarrow\left(-r_{B}^{*},-v_{B}^{*}\right),\label{eq:mirror-x}\\
 & P_{y}\left[\left(r_{A},v_{A}\right)\rightarrow\left(r_{B},v_{B}\right)\right]=\left(r_{A}^{*},v_{A}^{*}\right)\rightarrow\left(r_{B}^{*},v_{B}^{*}\right),\label{eq:mirror-y}
\end{align}
where the asterisk $*$ as a superscript denotes the conjugation of
the associated complex number. The rotation of the collision process
around the $x_{i}$ by an angle $\theta$, denoted as $R_{x_{i}}^{\theta}$,
is defined by 
\begin{equation}
R_{x_{i}}^{\theta}\left[\left(r_{A},v_{A}\right)\rightarrow\left(r_{B},v_{B}\right)\right]=\left(R_{x_{i}}^{\theta}\left(r_{A}\right),R_{x_{i}}^{\theta}\left(v_{A}\right)\right)\rightarrow\left(R_{x_{i}}^{\theta}\left(r_{B}\right),R_{x_{i}}^{\theta}\left(v_{B}\right)\right),\label{eq:rotation}
\end{equation}
where $R_{x_{i}}^{\theta}\left(r_{P}\right)$ and $R_{x_{i}}^{\theta}\left(v_{P}\right)$
(with $P=A,B$ ) represent the rotation of $r_{P}$ and $v_{P}$ by
an angle $\theta$. The sign-changing operators $\sigma_{f}$ and
$\sigma_{c}$ are defined as
\begin{align}
 & \sigma_{f}\left(F\right)=-F,\label{eq:sign-changing-F}\\
 & \sigma_{c}\left(q\right)=-q,\label{eq:sign-changing-q}
\end{align}
where $F$ represents the background field, such as an electric or
magnetic field. In Eqs.\,(\ref{eq:time-reversal})-(\ref{eq:rotation}),
we have assumed that all the collision processes are allowed.

Before discussing the inverse collision of a charged particle in a
MD, let's first revisit the scenario of Coulomb collisions. Suppose
the target particle is fixed during the collision process, meaning
the mass of the target particle is infinite. Figure.\,\ref{Fig: Coulomb and MDA collision}(a)
illustrates the Coulomb collision process. For any collision process
$\left(r_{A},v_{A}\right)\rightarrow\left(r_{B},v_{B}\right)$, ,
it is straightforward to verify that both time reversal symmetry (\ref{eq:time-reversal})
symmetry and parity symmetry (\ref{eq:parity}) are preserved in the
Coulomb collision process. To solve the inverse collision problem,
we can apply both time reversal $(T)$ and $(P)$ transformations.
This yields:
\begin{align}
 & TP\left[\left(r_{A},v_{A}\right)\rightarrow\left(r_{B},v_{B}\right)\right]=T\left[\left(-r_{A},-v_{A}\right)\rightarrow\left(-r_{B},-v_{B}\right)\right]\nonumber \\
 & =\left(-r_{B},v_{B}\right)\rightarrow\left(-r_{A},v_{A}\right).\label{eq:TP-Columb}
\end{align}
From Eq.\,(\ref{eq:TP-Columb}), we can see that if we place the
charge at $-r_{B}$ with velocity $v_{B}$, the final velocity will
be $\boldsymbol{v}_{A}$, which is exactly the inverse of the Coulomb
collision, as shown in Fig.\,\ref{Fig: Coulomb and MDA collision}(a).
It is also easy to verify that exchanging the order of operations
for $T$ and $P$, i.e., $TP=PT$, holds true for the Coulomb collision
process.

\begin{figure}
\includegraphics[scale=0.5]{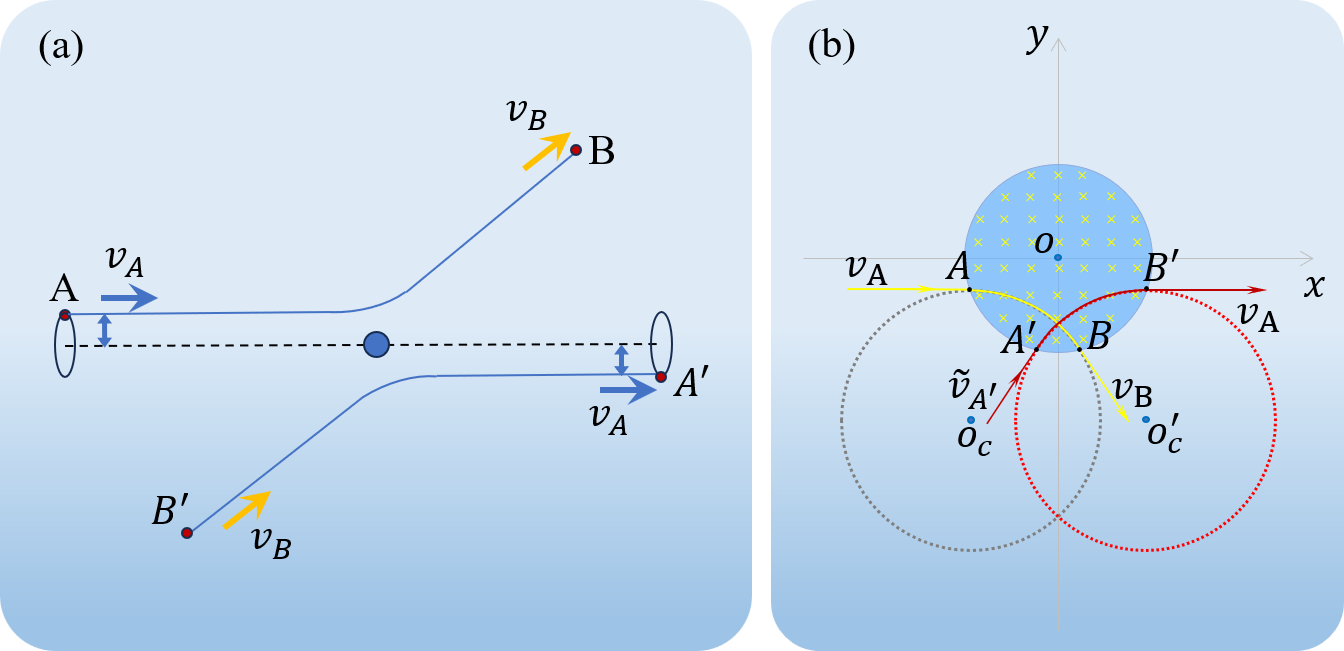}

\caption{\label{Fig: Coulomb and MDA collision} (a) Coulomb collision process.
The charged particles are scattered by a heavy, charged target, which
remains fixed during the collision. The process $B'\rightarrow A'$
is the inverse of the process $A\rightarrow B$. In this context,
the positions of $A$ and $B$ are $r_{A}$ and $r_{B}$, respectively,
while the positions of $A'$ and $B'$ are $-r_{A}$ and $-r_{B}$.
(b) Inverse collision process in a single MD. The circular area with
the cross symbolizes a magnetic disk. The red circular orbit represents
the mirror image of the yellow one with respect to the $y$-axis.
When a particle enters the magnetic field with velocity $v_{A}$,
parallel to the $x$-axis, it exits with velocity $v_{B}$. When the
incident velocity is $\tilde{v}$, which is the mirror of $v_{B}$
about the $y$-axis, the exit velocity is $v_{A}$.}
\end{figure}

For a particle crossing the MD, time reversal symmetry is broken due
to the background magnetic field. Specifically,

\begin{equation}
T\left[\left(r_{A},v_{A}\right)\rightarrow\left(r_{B},v_{B}\right)\right]=\left(r_{B},-v_{B}\right)\nrightarrow\left(r_{A},-v_{A}\right).
\end{equation}
This means that if we reverse the final velocity to $-\boldsymbol{v}_{B}$,
the particle will not return to its initial state. To reverse the
particle's trajectory, we need to simultaneously change the sign of
the magnetic field (or apply the sign-changing operator $\sigma_{f}$)
while reversing the final velocity. This can be expressed as:
\begin{equation}
\left(T\sigma_{f}\right)\left[\left(r_{A},v_{A};q,B_{z}\right)\rightarrow\left(r_{B},v_{B};q,B_{z}\right)\right]=\left(r_{B},-v_{B};q,-B_{z}\right)\rightarrow\left(r_{A},-v_{A};q,-B_{z}\right).\label{eq:(Tsigma_f)-MD}
\end{equation}
Additionally, since rotational symmetry always holds, we can rotate
the entire system around the axis-$y$ by an angle of $\pi$ to obtain
another possible collision process
\begin{align}
 & R_{y}^{\pi}\left[\left(r_{A},v_{A};q,B_{z}\right)\rightarrow\left(r_{B},v_{B};q,B_{z}\right)\right]\nonumber \\
 & =\left(R_{y}^{\pi}\left(r_{A}\right),R_{y}^{\pi}\left(v_{A}\right);q,-B_{z}\right)\rightarrow\left(R_{y}^{\pi}\left(r_{B}\right),R_{y}^{\pi}\left(v_{B}\right);q,-B_{z}\right).\label{eq:Ry(pi)-MD}
\end{align}
Suppose the incident velocity is parallel to the $x$-axis, it is
straightforward to observe that $R_{y}^{\pi}\left(r_{P}\right)=P_{x}\left(r_{P}\right)$
and $R_{y}^{\pi}\left(v_{P}\right)=P_{x}\left(v_{P}\right)$. This
implies that $R_{y}^{\pi}=P_{x}\sigma_{f}$. Therefore, we have
\begin{equation}
\left(P_{x}\sigma_{f}\right)\left[\left(r_{A},v_{A};q,B_{z}\right)\rightarrow\left(r_{B},v_{B};q,B_{z}\right)\right]=\left(-r_{A}^{*},-v_{A}^{*};q,-B_{z}\right)\rightarrow\left(-r_{B}^{*},-v_{B}^{*};q,-B_{z}\right).\label{eq:Px-sigma_f}
\end{equation}
We can now combine the $T\sigma_{f}$ and $P_{x}\sigma_{f}$ operations
to obtain the inverse velocity 
\begin{align}
 & \left(T\sigma_{f}\right)\left(P_{x}\sigma_{f}\right)\left[\left(r_{A},v_{A};q,B_{z}\right)\rightarrow\left(r_{B},v_{B};q,B_{z}\right)\right]\nonumber \\
 & =\left(T\sigma_{f}\right)\left[\left(-r_{A}^{*},-v_{A}^{*};q,-B_{z}\right)\rightarrow\left(-r_{B}^{*},-v_{B}^{*};q,-B_{z}\right)\right]\nonumber \\
 & =\left(-r_{B}^{*},v_{B}^{*};q,B_{z}\right)\rightarrow\left(-r_{A}^{*},v_{A}^{*};q,-B_{z}\right)=\left(-r_{B}^{*},v_{B}^{*};q,B_{z}\right)\rightarrow\left(-r_{A}^{*},v_{A};q,-B_{z}\right),\label{eq:Tsigma_f+Px-sigma_f}
\end{align}
where we used $v_{A}=v_{A}^{*}$ in the last step because $v_{A}$
is parallel to the $x$-axis. This inverse collision can also be achieved
by combining the operations $T\sigma_{c}$ and $P_{x}\sigma_{c}$.
This can be expressed as follows:
\begin{align}
 & \left(T\sigma_{c}\right)\left(P_{x}\sigma_{c}\right)\left[\left(r_{A},v_{A};q,B_{z}\right)\rightarrow\left(r_{B},v_{B};q,B_{z}\right)\right]\nonumber \\
 & =\left(T\sigma_{c}\right)\left[\left(-r_{A}^{*},-v_{A}^{*};-q,B_{z}\right)\rightarrow\left(-r_{B}^{*},-v_{B}^{*};-q,B_{z}\right)\right]\nonumber \\
 & =\left(-r_{B}^{*},v_{B}^{*};q,B_{z}\right)\rightarrow\left(-r_{A}^{*},v_{A}^{*};q,B_{z}\right)=\left(-r_{B}^{*},v_{B}^{*};q,B_{z}\right)\rightarrow\left(-r_{A}^{*},v_{A};q,-B_{z}\right).\label{eq:35}
\end{align}
The corresponding process is illustrated in Fig.\,\ref{Fig: Coulomb and MDA collision}(b).
It's also easy to verify that $\left(T\sigma_{f}\right)\left(P_{x}\sigma_{f}\right)=\left(P_{x}\sigma_{f}\right)\left(T\sigma_{f}\right)$
and $\left(T\sigma_{c}\right)\left(P_{x}\sigma_{c}\right)=\left(P_{x}\sigma_{c}\right)\left(T\sigma_{c}\right)$.
From Eq.\,(\ref{eq:Tsigma_f+Px-sigma_f}) or (\ref{eq:35}), we can
place the charged particle at $-r_{B}^{*}$ with the velocity $v_{B}^{*}$
ensuring that the particle's velocity will be $\boldsymbol{v}_{A}$
when it exits the MD. The inverse velocity, denoted as $\tilde{v}_{A'}$,
is then given by $\tilde{v}_{A'}=v_{B}^{*}$. From Sec.\,\ref{subsec:Scattering-in-a-MD},
we know that $v_{B}=e^{i\theta}v_{A}$ (see Eq.\,(\ref{eq:e^theta})
for the expression of $e^{i\theta}$ ). Therefore, we have
\begin{equation}
\tilde{v}_{A'}=v_{B}^{*}=e^{-i\theta}v_{A}.\label{eq:inverse-velocity}
\end{equation}

\section{collision integral and Boltzmann equation \label{sec:Boltzmann-eq-collision-integral}}

\subsection{Collision integral}

We now study the kinetics of a collection of particles scattered by
the MDA. Assuming all the MDs are identical, having the same radius
and magnetic field strength, the number density of the MDs is denoted
by $n_{D}(t,r)$. We proceed to derive the collision operator for
the particles scattered by the magnetic disk array. Consider charged
particles being scattered from velocity $v$ to another velocity,
referred to as \textquotedbl losses,\textquotedbl{} at position $r$
over a time interval $dt$. The number of collisions is then given
by $\left[d^{2}rd^{2}vf_{\alpha}\left(t,r,v\right)\right]\left[n_{D}(t,r)|v|dtdb\right]$,
where $f_{\alpha}\left(t,r,v\right)$ is the distribution function
for $\alpha$-species, and $b$, as defined by Eq.\,(\ref{eq:rM}),
is the impact parameter (see Fig.\,\ref{Fig: Coulomb and MDA collision}(a)).
The total number of collisions between particles and the MDs within
the volume element $d^{2}rd^{2}v$ during the time interval $dt$
is
\begin{equation}
dtd^{2}rd^{2}v\int_{-|r_{D}|}^{|r_{D}|}f_{\alpha}\left(t,r,v\right)n_{D}(t,r)|v|db.\label{eq:losses}
\end{equation}
The ``gains'', or the collisions that bring particles into the the
range $v$ during $dt$, are given by
\begin{align}
 & dtd^{2}rd^{2}\tilde{v}\int_{-|r_{D}|}^{|r_{D}|}f_{\alpha}\left(t,r,\tilde{v}\right)n_{D}(t,r)|\tilde{v}|db\nonumber \\
 & =dtdrdv\int_{-|r_{D}|}^{|r_{D}|}f_{\alpha}\left(t,r,\tilde{v}\right)n_{D}(t,r)|v|db,\label{eq:gains}
\end{align}
where $\tilde{v}$ is the inverse velocity of $v$. In deriving Eqs.\,(\ref{eq:losses})
and (\ref{eq:gains}), we used the relations $d\tilde{v}=dv$ and
$|\tilde{v}|=|v|$ (since the Lorentz force in a magnetic field does
not change the speed of the charged particle). Additionally, we assumed
that the particle's position is approximately the same as the center
of the MD when they encounter each other. Subtracting the ``losses''
(\ref{eq:losses}) from the ``gains'' (\ref{eq:gains}) and divide
it by $dtdrdv$, we thus derived the collision integral as

\begin{equation}
\left(\frac{\partial f_{\alpha}}{\partial t}\right)_{c}=n_{D}(t,r)|v|\int_{-|r_{D}|}^{|r_{D}|}\left[f_{\alpha}\left(t,r,\tilde{v}\right)-f_{\alpha}\left(t,r,v\right)\right]db.\label{eq:collision-integral-db}
\end{equation}

Since the inverse velocity $\tilde{v}$ is determined by $\varphi$
(see Eqs.\,(\ref{eq:inverse-velocity}) and (\ref{eq:e^theta-phi-split})),
we need to transform the integral in Eq.\,(\ref{eq:collision-integral-db})
into a form that is expressed in terms of $\varphi$. From Eq.\,(\ref{eq:cos(phi)}),
we can see that the scattering property is totally different for different
sign of gyro frequency. Using Eq.\,(\ref{eq:cos(phi)}), we have
\begin{equation}
db=s_{\alpha}|r_{D}|\mathrm{sin\varphi d\varphi},\label{eq:db}
\end{equation}
where 
\begin{equation}
s_{\alpha}\equiv\mathrm{sgn(\omega_{c\alpha})}.\label{eq:s_alpha}
\end{equation}
In this section and the ones that follow, we will use the subscript
$\alpha$ to distinguish between different physical quantities. For
example, $\omega_{c\alpha}$ in Eq.\,(\ref{eq:s_alpha}) represent
the gyrofrequencies of different particle species, given by $\omega_{c\alpha}=-q_{\alpha}B_{z}/m_{\alpha}c$,
where $q_{\alpha}$ and $m_{\alpha}$ denote the charge and mass of
the $\alpha$-species.
\begin{figure}
\includegraphics[scale=0.6]{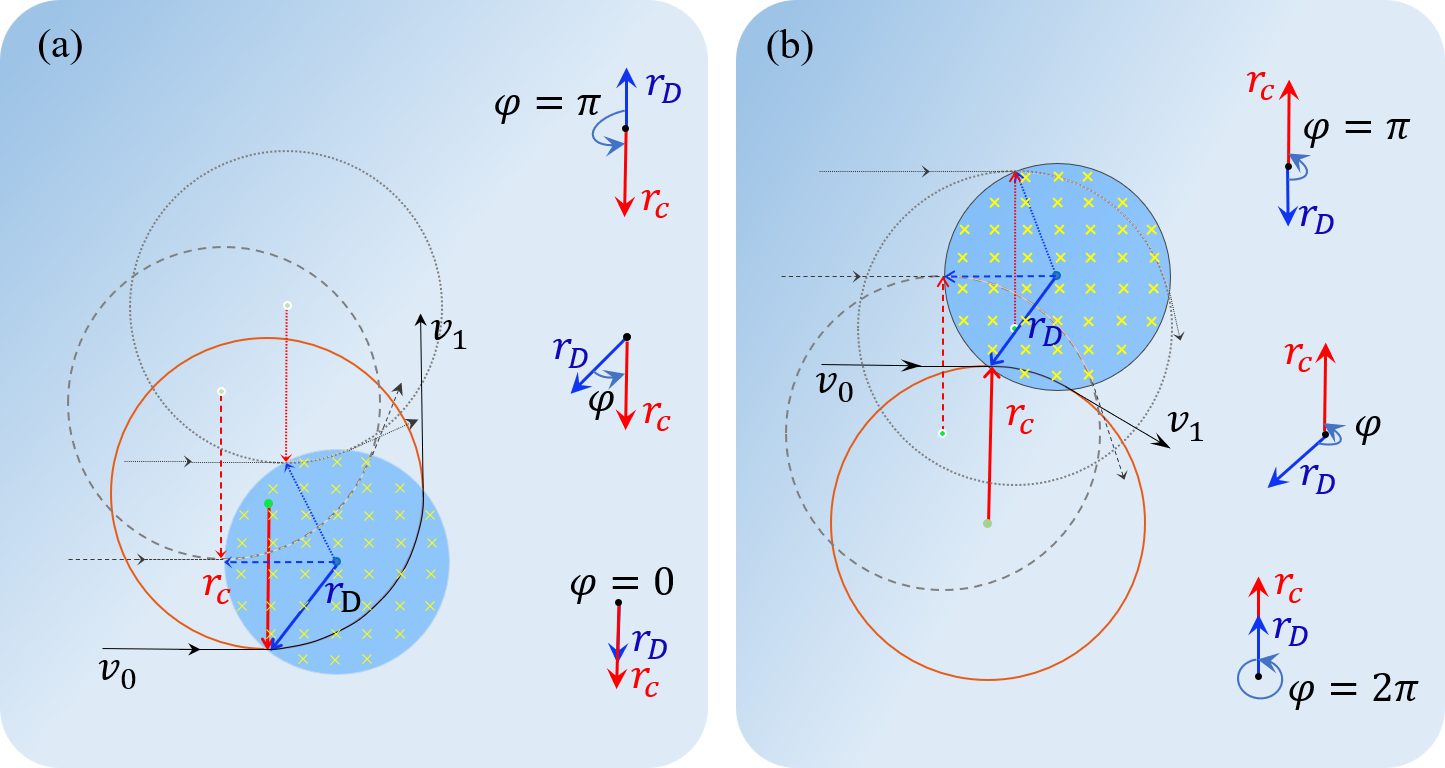}

\caption{\label{fig:Value-range-of-phi}Value range of $\varphi$ for two types
of collisions in a MD. (a) When $\omega_{c\alpha}>0$, charged particles
are deflected by magnetic fields, with $\varphi$ varying from 0 to
$\pi$; (b) When $\omega_{c\alpha}<0$, charged particles are deflected
by magnetic fields, with $\varphi$ varying from $\pi$ to $2\pi$.}
\end{figure}

In the case where $\omega_{c\alpha}>0$ (or $s_{\alpha}>0$, i.e.,
$q_{\alpha}>0$ and $B_{z}>0$, or $q_{\alpha}<0$ and $B_{z}<0$),
we have $\varphi=0$ when $b=-\left|r_{D}\right|$, and $\varphi=\pi$
when $b=\left|r_{D}\right|$ as shown in Fig.\,\ref{fig:Value-range-of-phi}(a).
By combining Eqs.\,(\ref{eq:e^theta}) and (\ref{eq:inverse-velocity}),
and substituting Eq.\,(\ref{eq:db}) into (\ref{eq:collision-integral-db}),
the collision integral becomes
\begin{equation}
\text{\ensuremath{\left(\frac{\partial f_{\alpha}}{\partial t}\right)}}_{c}^{\left(+\right)}=n_{D}(t,r)|v||r_{D}|\int_{0}^{\pi}\left[f_{\alpha}\left(t,r,\frac{\gamma_{\alpha}-e^{i\varphi}}{\gamma_{\alpha}-e^{-i\varphi}}v\right)-f_{\alpha}\left(t,r,v\right)\right]\mathrm{sin\varphi}d\varphi,\label{eq:collision-integral-0-pi(wc>0)}
\end{equation}
where $\gamma_{\alpha}$ is defined by Eq.\,(\ref{eq:gamma}) for
the $\alpha$-species, and the superscript $\left(+\right)$ is used
to indicate that $\omega_{c\alpha}>0$. Figure.\,\ref{fig:Value-range-of-phi}(b)
shows the other case where $\omega_{c\alpha}<0$ (or $s_{\alpha}<0$,
i.e., $q_{\alpha}<0$ and $B_{z}>0$, or $q_{\alpha}>0$ and $B_{z}<0$).
In this case, $\varphi=\pi$ when $b=-\left|r_{B}\right|$, and $\varphi=2\pi$
when $b=\left|r_{D}\right|$. Under this condition, the collision
integral can be expressed as
\begin{equation}
\text{\ensuremath{\left(\frac{\partial f_{\alpha}}{\partial t}\right)}}_{c}^{\left(-\right)}=n_{D}(t,r)|v||r_{D}|\int_{2\pi}^{\pi}\left[f_{\alpha}\left(t,r,\frac{\gamma_{\alpha}-e^{i\varphi}}{\gamma_{\alpha}-e^{-i\varphi}}v\right)-f_{\alpha}\left(t,r,v\right)\right]\mathrm{sin\varphi}d\varphi\label{eq:collision-integral-2pi-pi (wc<0)}
\end{equation}
by substituting Eq.\,(\ref{eq:db}) into (\ref{eq:collision-integral-db}),
where the superscript $\left(-\right)$ is used to indicate that $\omega_{c\alpha}<0$.
By changing the variable $\varphi\rightarrow\varphi-\pi$, equation
(\ref{eq:collision-integral-2pi-pi (wc<0)}) can be transformed into
\begin{equation}
\text{\ensuremath{\left(\frac{\partial f_{\alpha}}{\partial t}\right)}}_{c}^{\left(-\right)}=n_{D}(t,r)|v||r_{D}|\int_{0}^{\pi}\left[f_{\alpha}\left(t,r,\frac{\gamma_{\alpha}+e^{i\varphi}}{\gamma_{\alpha}+e^{-i\varphi}}v\right)-f_{\alpha}\left(t,r,v\right)\right]\mathrm{sin\varphi}d\varphi.\label{eq:collision-integral-0=00003Dpi (w<0)}
\end{equation}
By combining Eq.\,(\ref{eq:collision-integral-0-pi(wc>0)}) and (\ref{eq:collision-integral-0=00003Dpi (w<0)}),
the collision integral can be expressed in a unified form as
\begin{equation}
\text{\ensuremath{\left(\frac{\partial f_{\alpha}}{\partial t}\right)}}_{c}^{\pm}=n_{D}(t,r)|v||r_{D}|\int_{0}^{\pi}\left[f_{\alpha}\left(t,r,\frac{\gamma_{\alpha}-s_{\alpha}e^{i\varphi}}{\gamma_{\alpha}-s_{\alpha}e^{-i\varphi}}v\right)-f_{\alpha}\left(t,r,v\right)\right]\mathrm{sin\varphi}d\varphi.\label{eq:collision-integral-unified}
\end{equation}
Here, the superscript $"+"$ corresponds to the case where $s_{\alpha}>0$,
while $"-"$ corresponds to the case where $s_{\alpha}<0$.

\subsection{Boltzmann equation and the moment equation}

With the collision integral derived, the Boltzmann equation can be
obtained as 
\begin{equation}
\frac{\partial f_{\alpha}}{\partial t}+\boldsymbol{v}\cdot\frac{\partial f_{\alpha}}{\partial\boldsymbol{r}}+\frac{\boldsymbol{F}_{\alpha}}{m_{\alpha}}\cdot\frac{\partial f_{\alpha}}{\partial\boldsymbol{v}}=\left(\frac{\partial f_{\alpha}}{\partial t}\right)_{c}^{\pm},\label{eq:Boltzmann-vector}
\end{equation}
where $\boldsymbol{F}_{\alpha}=\boldsymbol{F}_{\alpha}\left(t,\boldsymbol{r},\boldsymbol{v}\right)$
is the force field. Further details on this topic can be found in
the literature, such as in Ref.\,\citet{Lifshitz_1981}. In this
study, we assume that the force can be expressed as
\begin{equation}
\boldsymbol{F}_{\alpha}\left(t,\boldsymbol{r},\boldsymbol{v}\right)=\boldsymbol{F}_{\alpha}^{\left(1\right)}\left(t,\boldsymbol{r}\right)+\boldsymbol{F}_{\alpha}^{\left(2\right)}\left(t,\boldsymbol{r},\boldsymbol{v}\right),\label{eq:Force1+2}
\end{equation}
where $\boldsymbol{F}_{\alpha}^{\left(1\right)}$ is only dependent
on the position $\boldsymbol{r}$ and $\boldsymbol{F}_{\alpha}^{\left(2\right)}$
is perpendicular to the velocity, i.e., $\boldsymbol{F}_{\alpha}^{\left(2\right)}\cdot\boldsymbol{v}$.
In cases where no confusion arises, the complex form of collision
integral in Eq.\,(\ref{eq:Boltzmann-vector}) is automatically identified
as the vector form. Using Wirtinger derivative notation \citet{Remmert-Book-1991,Ken2009},
the Boltzmann equation (\ref{eq:Boltzmann-vector}) can also be expressed
in the form of complex variables as
\begin{equation}
\frac{\partial f_{\alpha}}{\partial t}+\frac{1}{2}\left[v\frac{\partial f_{\alpha}}{\partial r^{*}}+v^{*}\frac{\partial f_{\alpha}}{\partial r}\right]+\frac{1}{2}\left[\frac{F_{\alpha}}{m_{\alpha}}\frac{\partial f_{\alpha}}{\partial v^{*}}+\frac{F_{\alpha}^{*}}{m_{\alpha}}\frac{\partial f_{\alpha}}{\partial v}\right]=\left(\frac{\partial f_{\alpha}}{\partial t}\right)_{c}^{\pm}.\label{eq:Boltzmann-complex}
\end{equation}
From the microscopic description of the Boltzmann transport equation
(\ref{eq:Boltzmann-vector}) or (\ref{eq:Boltzmann-complex}), we
can transition to a fluid mechanics framework by averaging over velocity-dependent
functions. This approach, known as the moment method, yields a macroscopic
transport model \citet{Swanson-book-2008}.

Suppose $\psi\left(\boldsymbol{v}\right)$ is an arbitrary function
of velocity. Its average value, taken over the velocity space, is
defined as
\begin{equation}
\left\langle \psi\left(\boldsymbol{v}\right)\right\rangle _{\alpha}\equiv\frac{1}{n_{\alpha}\left(t,\boldsymbol{r}\right)}\int f_{\alpha}\left(t,\boldsymbol{r},\boldsymbol{v}\right)\psi\left(\boldsymbol{v}\right)d^{2}\boldsymbol{v},
\end{equation}
where 
\begin{equation}
n_{\alpha}\left(t,\boldsymbol{r}\right)=\int f_{\alpha}\left(t,\boldsymbol{r},\boldsymbol{v}\right)d^{2}\boldsymbol{v}
\end{equation}
is the local number density of particles of $\alpha$-species. By
multiplying Eq.\,(\ref{eq:Boltzmann-vector}) with $\psi\left(\boldsymbol{v}\right)$
and averaging the entire equation, we obtain

\begin{equation}
\frac{\partial}{\partial t}\left[n_{\alpha}\left\langle \psi\left(\boldsymbol{v}\right)\right\rangle _{\alpha}\right]+\nabla\cdot\left[n_{\alpha}\left\langle \psi\left(\boldsymbol{v}\right)\boldsymbol{v}\right\rangle _{\alpha}\right]-\frac{n_{\alpha}}{m_{\alpha}}\left\langle \boldsymbol{F}_{\alpha}\cdot\frac{\partial\psi}{\partial\boldsymbol{v}}\right\rangle _{\alpha}=\int\psi\left(\boldsymbol{v}\right)\left(\frac{\partial f_{\alpha}}{\partial t}\right)_{c}^{\pm}d^{2}\boldsymbol{v}.\label{eq:general-average-Eq}
\end{equation}
The detailed derivation of Eq.\,(\ref{eq:general-average-Eq}) can
be found in Ref.\,\citep{Swanson-book-2008}.

When $\psi\left(\boldsymbol{v}\right)=1$, the average
\begin{equation}
\left\langle \psi\left(\boldsymbol{v}\right)\right\rangle _{\alpha}=1,\label{eq:zeroth-moment}
\end{equation}
which represents the zeroth moment of the velocity. In this case,
the integral on the right-hand-side of Eq.\,(\ref{eq:general-average-Eq})
equals zero, i.e., 
\begin{equation}
\int\left(\frac{\partial f_{\alpha}}{\partial t}\right)_{c}^{\pm}d^{2}\boldsymbol{v}=0.\label{eq:collision-invariant-number}
\end{equation}
This result can be easily derived using Eq.\,(\ref{eq:collision-integral-unified})
and indicates that collisions do not change the total number of particles.
Substituting Eqs.\,(\ref{eq:zeroth-moment}) and (\ref{eq:collision-invariant-number})
into Eq.\,(\ref{eq:general-average-Eq}), we obtain the continuity
equation
\begin{equation}
\frac{\partial n_{\alpha}}{\partial t}+\nabla\cdot\left(n_{\alpha}\boldsymbol{V}_{\alpha}\right)=0.\label{eq:continuity-equation}
\end{equation}
where $\boldsymbol{V}_{\alpha}=\left\langle \boldsymbol{v}\right\rangle _{\alpha}$
is the average velocity of the fluid element (or fluid velocity) of
$\alpha$-species.

When $\psi\left(\boldsymbol{v}\right)=m_{\alpha}\boldsymbol{v}$,
the average value $\left\langle \psi\left(\boldsymbol{v}\right)\right\rangle _{\alpha}$
becomes
\begin{equation}
\left\langle m_{\alpha}\boldsymbol{v}\right\rangle _{\alpha}=m_{\alpha}\boldsymbol{V}_{\alpha}\label{eq:first-moment}
\end{equation}
which represents the first moment of the velocity. In this case, the
integral on the right-hand side of Eq.\,(\ref{eq:general-average-Eq})
represents the scattering force exerted by the background MDA, which
is given by 
\begin{equation}
\int m_{\alpha}\boldsymbol{v}\left(\frac{\partial f_{\alpha}}{\partial t}\right)_{c}^{\pm}d^{2}\boldsymbol{v}=\int m_{\alpha}\boldsymbol{w}_{\alpha}\left(\frac{\partial f_{\alpha}}{\partial t}\right)_{c}^{\pm}d^{2}\boldsymbol{v}\equiv\boldsymbol{R}_{\alpha}^{\pm}\left(t,\boldsymbol{r}\right),\label{eq:force-scattering}
\end{equation}
where $\boldsymbol{w}_{\alpha}=\boldsymbol{v}-\boldsymbol{V}_{\alpha}$
is the perturbation from the average velocity ($\left\langle \boldsymbol{w}_{\alpha}\right\rangle =0$).
Here, $\boldsymbol{R}_{\alpha}^{\pm}\left(t,\boldsymbol{r}\right)$
represents the momentum transferred from the background MDA to the
$\alpha$-species particles per unit time, or equivalently, the scattering
force exerted by the MDA. In deriving Eq.\,(\ref{eq:force-scattering}),
we used the relation
\begin{equation}
\int m_{\alpha}\boldsymbol{V}_{\alpha}\left(\frac{\partial f_{\alpha}}{\partial t}\right)_{c}^{\pm}d^{2}\boldsymbol{v}=\boldsymbol{V}_{\alpha}\left[\int m_{\alpha}\left(\frac{\partial f_{\alpha}}{\partial t}\right)_{c}^{\pm}d^{2}\boldsymbol{v}\right]=0.\label{eq:56}
\end{equation}
By directly integrating the left-hand side of Eq.\,(\ref{eq:force-scattering}),
it is straightforward to verify, using Eq.\,(\ref{eq:collision-integral-unified}),
that the scattering force vanishes in the absence of macroscopic flow,
i.e., 
\begin{equation}
\boldsymbol{R}_{\alpha}^{\pm}\left(t,\boldsymbol{r}\right)=0,\text{ when }\left\langle \boldsymbol{v}\right\rangle _{\alpha}=\boldsymbol{V}_{\alpha}=0.\label{eq:scattering-force-vanish}
\end{equation}
Substituting Eqs.\,(\ref{eq:first-moment}) and (\ref{eq:force-scattering})
into Eq.\,(\ref{eq:general-average-Eq}), we can obtain the momentum
equation
\begin{equation}
\frac{\partial}{\partial t}\left(m_{\alpha}n_{\alpha}\boldsymbol{V}_{\alpha}\right)+\nabla\cdot\left(m_{\alpha}n_{\alpha}\boldsymbol{V}_{\alpha}\boldsymbol{V}_{\alpha}\right)=-\nabla\cdot\boldsymbol{p}_{\alpha}+\boldsymbol{R}_{\alpha}^{\pm}+n_{\alpha}\boldsymbol{F}_{\alpha},\label{eq:momentum-equation}
\end{equation}
where $\boldsymbol{p}_{\alpha}\equiv m_{\alpha}n_{\alpha}\left\langle \boldsymbol{w}_{\alpha}\boldsymbol{w}_{\alpha}\right\rangle $
is the stress tensor and $\boldsymbol{F}_{\alpha}=\boldsymbol{F}_{\alpha}\left(t,\boldsymbol{r},\boldsymbol{V}_{\alpha}\right)$.

Let $\psi\left(\boldsymbol{v}\right)=m_{\alpha}\boldsymbol{v}^{2}/2$,
the average value $\left\langle \psi\left(\boldsymbol{v}\right)\right\rangle _{\alpha}$
becomes
\begin{equation}
\left\langle \frac{1}{2}m_{\alpha}\boldsymbol{v}^{2}\right\rangle _{\alpha}=\frac{1}{2}m_{\alpha}\boldsymbol{V}_{\alpha}^{2}+\frac{U_{\alpha}}{n_{\alpha}},\label{eq:second moment}
\end{equation}
which represents the second moment of the velocity. Here, $U_{\alpha}=m_{\alpha}n_{\alpha}\left\langle \boldsymbol{w}_{\alpha}^{2}\right\rangle $/2
is the internal energy density. Since collisions do not change the
energy, the integral on the right-hand side of Eq.\,(\ref{eq:general-average-Eq})
equals zero, i.e.,
\begin{equation}
\int\frac{1}{2}m_{\alpha}\boldsymbol{v}^{2}\left(\frac{\partial f_{\alpha}}{\partial t}\right)_{c}^{\pm}d^{2}\boldsymbol{v}=0.\label{eq:collision-invariant-energy}
\end{equation}
This result, which can also be derived directly from Eq.\,(\ref{eq:collision-integral-unified}),
implies that 
\begin{equation}
\boldsymbol{R}_{\alpha}^{\pm}\cdot\boldsymbol{V}_{\alpha}=0,\label{eq:R dot V =00003D0}
\end{equation}
meaning that the macroscopic force $\boldsymbol{R}_{\alpha}^{\pm}$
does no work on the fluid element. The detailed proof is provided
in Appendix.\,\ref{sec:proofs}. Substituting Eqs.\,(\ref{eq:second moment})
and (\ref{eq:collision-invariant-energy}) into Eq.\,(\ref{eq:general-average-Eq}),
we can obtain the energy equation
\begin{equation}
\frac{\partial U_{\alpha}}{\partial t}+\nabla\cdot\left(U_{\alpha}\boldsymbol{V}_{\alpha}\right)=-\boldsymbol{P}_{\alpha}:\nabla\boldsymbol{V}_{\alpha}+\nabla\cdot\boldsymbol{q}_{\alpha},\label{eq:energy-equation}
\end{equation}
where 
\begin{equation}
\boldsymbol{q}_{\alpha}=\frac{1}{2}m_{\alpha}n_{\alpha}\left\langle \boldsymbol{w}_{\alpha}^{2}\boldsymbol{w}_{\alpha}\right\rangle _{\alpha}\label{eq:heat-flux}
\end{equation}
is the heat flux density. The technique used to derive Eq.\,(\ref{eq:energy-equation})
can be found in Refs.\,\citep{Lifshitz_1981}and \citep{Swanson-book-2008}.
A detailed derivation is provided in Appendix.\,\ref{sec:proofs}.

\section{linear solution of Boltzmann equation\label{sec:linear solution of Boltzmann equation}}

\subsection{Collision integrals via Fourier decomposition}

We begin with Eqs.\,(\ref{eq:collision-integral-unified}) and (\ref{eq:Boltzmann-vector}).
Let 
\begin{equation}
v^{\pm}=\frac{\gamma_{\alpha}-s_{\alpha}e^{i\varphi}}{\gamma_{\alpha}-s_{\alpha}e^{-i\varphi}}v.\label{eq:v^(pm)}
\end{equation}
From Eqs.\,(\ref{eq:e^theta}) and (\ref{eq:inverse-velocity}),
we have $\left|v^{\pm}\right|=\left|v\right|$. Therefore, we can
rewrite $v$ and $v^{\pm}$ as 
\begin{align}
 & v=\left|v\right|e^{i\xi},\quad v^{\pm}=\left|v\right|e^{i\xi^{\pm}},\\
 & e^{i\xi^{\pm}}=\frac{\gamma_{\alpha}-s_{\alpha}e^{i\varphi}}{\gamma_{\alpha}-s_{\alpha}e^{-i\varphi}}e^{i\xi}.\label{eq:xi^(pm)-xi}
\end{align}
By rewriting the distribution function $f_{\alpha}\left(t,r,v\right)$
in terms of $|v|$ and $\xi$, $f_{\alpha}\left(t,r,v\right)$ becomes
a periodic function of $\xi$, i.e., 
\begin{equation}
f_{\alpha}\left(t,r,v\right)=f_{\alpha}\left(t,r,\left|v\right|,\xi\right)=f_{\alpha}\left(t,r,\left|v\right|,\xi+2\pi\right).\label{eq:31}
\end{equation}
 Hence, we can decompose $f_{\alpha}\left(t,r,v\right)$ with respect
to $\xi$ into a Fourier series
\begin{equation}
f_{\alpha}\left(t,r,v\right)=\sum_{m=-\infty}^{+\infty}f_{\alpha m}\left(t,r,\left|v\right|\right)e^{im\xi}.\label{eq:f_alpha-Fourier}
\end{equation}
Thus, $f_{\alpha}\left(t,r,v^{\pm}\right)$ becomes 
\begin{equation}
f_{\alpha}\left(t,r,v^{\pm}\right)=\sum_{m=-\infty}^{+\infty}f_{\alpha m}\left(t,r,\left|v\right|\right)e^{i\xi^{\pm}}=\sum_{m=-\infty}^{+\infty}f_{\alpha m}\left(t,r,\left|v\right|\right)\left(\frac{\gamma_{\alpha}-s_{\alpha}e^{i\varphi}}{\gamma_{\alpha}-s_{\alpha}e^{-i\varphi}}\right)^{m}e^{im\xi^{\pm}},\label{eq:f_alpha-Fourier-(pm)}
\end{equation}
where we used Eq.\,(\ref{eq:xi^(pm)-xi}). Substituting Eq.\,(\ref{eq:f_alpha-Fourier})
and Eq.\,(\ref{eq:f_alpha-Fourier-(pm)}) into Eq.\,(\ref{eq:collision-integral-unified}),
the collision integral becomes
\begin{equation}
\text{\ensuremath{\left(\frac{\partial f_{\alpha}}{\partial t}\right)}}_{c}^{\pm}=-n_{D}(t,r)\left|v\right|\left|r_{D}\right|\sum_{m\neq0}K_{m}^{\pm}\left(\gamma_{\alpha}\right)f_{\alpha m}\left(t,r,\left|v\right|\right)e^{im\xi},\label{eq:collision-integral-Fourier}
\end{equation}
where the function $K_{m}^{\pm}$ is defined as
\begin{equation}
K_{m}^{\pm}\left(\gamma_{\alpha}\right)\coloneqq\int_{0}^{\pi}\left[1-\left(\frac{\gamma_{\alpha}-s_{\alpha}e^{i\varphi}}{\gamma_{\alpha}-s_{\alpha}e^{-i\varphi}}\right)^{m}\right]\mathrm{sin\varphi}d\varphi.\label{eq:Km^(pm)}
\end{equation}
When $m=0$, it follows that 
\begin{equation}
K_{0}^{\pm}\left(\gamma_{\alpha}\right)\equiv0,\quad\forall\gamma_{\alpha}\in\mathbb{R},\label{eq:K0 =00003D 0}
\end{equation}
and $\forall m\in\mathbb{Z}$, it holds that
\begin{align}
\left[K_{m}^{\pm}\left(\gamma_{\alpha}\right)\right]^{*} & =K_{-m}^{\pm}\left(\gamma_{\alpha}\right),\label{eq:Km-K_m-conjugate-1}\\
\left[K_{m}^{+}\left(\gamma_{\alpha}\right)\right]^{*} & =K_{m}^{-}\left(\gamma_{\alpha}\right).\label{eq:Km-K_m-conjugate-2}
\end{align}
The derivation of Eqs.\,(\ref{eq:Km-K_m-conjugate-1}) and (\ref{eq:Km-K_m-conjugate-2})
is provided in the Appendix.\,\ref{sec:Km(x) function}.

\subsection{Calculation of the linear distribution function}

Assume that the distribution function can be expressed as 
\begin{equation}
f_{\alpha}=f_{\alpha0}+\delta f_{\alpha}\label{eq:f=00003Df0+f1}
\end{equation}
where $f_{\alpha0}$ is isotropic with respect to velocity and represents
the local equilibrium function, i.e., $f_{\alpha0}=f_{\alpha0}\left(r,\left|v\right|\right)$,
and $\delta f_{\alpha}\ll f_{\alpha0}$ represents a small deviation
from $f_{\alpha0}$. Substituting Eq.\,(\ref{eq:f=00003Df0+f1})
into Eq.\,(\ref{eq:Boltzmann-vector}), we obtain

\begin{align}
 & \frac{\partial\delta f_{\alpha}}{\partial t}+\left(\boldsymbol{v}\cdot\nabla f_{\alpha0}+\boldsymbol{v}\cdot\nabla\delta f_{\alpha}\right)+\left(\frac{\boldsymbol{F}_{\alpha}}{m_{\alpha}}\cdot\frac{\partial f_{\alpha0}}{\partial\boldsymbol{v}}+\frac{\boldsymbol{F}_{\alpha}}{m_{\alpha}}\cdot\frac{\partial\delta f_{\alpha}}{\partial\boldsymbol{v}}\right)\nonumber \\
 & =n_{D}(t,r)\left|v\right|\left|r_{D}\right|\int_{0}^{\pi}\left[\delta f_{\alpha}\left(t,r,v^{\pm}\right)-\delta f_{\alpha}\left(t,r,v\right)\right]\mathrm{sin\varphi}d\varphi,\label{eq:perturb-eq-full}
\end{align}
where we used the isotropic condition of $f_{\alpha0}$, which causes
the integral involving $f_{\alpha0}$ to vanish, i.e., 
\begin{equation}
\int_{0}^{\pi}\left[f_{\alpha0}\left(r,\left|v^{\pm}\right|\right)-f_{\alpha0}\left(r,\left|v\right|\right)\right]\mathrm{sin\varphi}d\varphi=0.\label{eq:39}
\end{equation}
Given that $\boldsymbol{v}\cdot\nabla f_{\alpha0}\gg\boldsymbol{v}\cdot\nabla\delta f_{\alpha}$
and $\left(\boldsymbol{F}_{\alpha}/m_{\alpha}\right)\cdot\left(\partial f_{0\alpha}/\partial\boldsymbol{v}\right)\gg\left(\boldsymbol{F}_{\alpha}/m_{\alpha}\right)\cdot\left(\partial\delta f_{\alpha}/\partial\boldsymbol{v}\right)$,
equation (\ref{eq:perturb-eq-full}) can be simplified by neglecting
the higher-order terms 
\begin{equation}
\frac{\partial\delta f_{\alpha}}{\partial t}+\boldsymbol{v}\cdot\nabla f_{0\alpha}+\frac{\boldsymbol{F}_{\alpha}}{m_{\alpha}}\cdot\frac{\partial f_{0\alpha}}{\partial\boldsymbol{v}}=n_{D}(t,r)\left|v\right|\left|r_{D}\right|\int_{0}^{\pi}\left[\delta f_{\alpha}\left(t,r,v^{\pm}\right)-\delta f_{\alpha}\left(t,r,v\right)\right]\mathrm{sin\varphi}d\varphi.\label{eq:perturb-eq-reduced}
\end{equation}

We now decompose $\delta f\left(t,r,v\right)$ into a Fourier series
\begin{equation}
\delta f_{\alpha}\left(t,r,v\right)=\sum_{m=-\infty}^{+\infty}\delta f_{\alpha m}\left(t,r,\left|v\right|\right)e^{im\xi}.\label{eq:f1-Fourier-series}
\end{equation}
Assuming that $\delta f\left(t,r,v\right)\propto e^{-i\omega t}\delta f\left(r,v\right)$,
we obtain
\begin{equation}
\frac{\partial\delta f_{\alpha}}{\partial t}=-i\omega\sum_{m=-\infty}^{+\infty}\delta f_{\alpha m}\left(t,r,\left|v\right|\right)e^{im\xi}.\label{eq:pa/pt-Fourier}
\end{equation}
Substituting Eqs.\,(\ref{eq:f1-Fourier-series}) and (\ref{eq:pa/pt-Fourier})
into Eq.\,(\ref{eq:perturb-eq-reduced}), we get
\begin{align}
 & -i\omega\sum_{m=-\infty}^{+\infty}\delta f_{\alpha m}\left(t,r,\left|v\right|\right)e^{im\xi}+\boldsymbol{v}\cdot\nabla f_{\alpha0}+\frac{\boldsymbol{F}_{\alpha}}{m_{\alpha}}\cdot\frac{\partial f_{\alpha0}}{\partial\boldsymbol{v}}\nonumber \\
 & =-n_{D}(t,r)\left|v\right|\left|r_{D}\right|\sum_{m\neq0}K_{m}^{\pm}\left(\gamma_{\alpha}\right)\delta f_{\alpha m}\left(t,r,\left|v\right|\right)e^{im\xi},\label{eq:perturb-eq-reduced-Fouriers-s1}
\end{align}
which can be equivalently rewritten as
\begin{equation}
\boldsymbol{v}\cdot\nabla f_{\alpha0}+\frac{\boldsymbol{F}_{\alpha}}{m_{\alpha}}\cdot\frac{\partial f_{0\alpha}}{\partial\boldsymbol{v}}=\sum_{m=-\infty}^{+\infty}-\left[\nu_{\alpha m}^{\pm}\left(t,r,\left|v\right|\right)-i\omega\right]\delta f_{\alpha m}\left(t,r,\left|v\right|\right)e^{im\xi},\label{eq:perturb-eq-reduced-Fouriers-omega}
\end{equation}
where 
\begin{equation}
\nu_{\alpha m}^{\pm}\left(t,r,\left|v\right|\right)=n_{D}(t,r)\left|v\right||r_{D}|K_{m}^{\pm}\left(\gamma_{\alpha}\right).\label{eq:nu_m(pm)}
\end{equation}
The Fourier coefficient $\delta f_{\alpha m}$ of $\delta f_{\alpha}$
can be determined by calculating the Fourier coefficients of Eq.\,(\ref{eq:perturb-eq-reduced-Fouriers-omega})
\begin{equation}
-2\pi\left[\nu_{\alpha m}^{\pm}\left(t,r,\left|v\right|\right)-i\omega\right]\delta f_{\alpha m}=\left(e^{im\xi},\boldsymbol{v}\cdot\nabla f_{\alpha0}+\frac{\boldsymbol{F}_{\alpha}}{m_{\alpha}}\cdot\frac{\partial f_{0\alpha}}{\partial\boldsymbol{v}}\right).\label{eq:inner-product-F0}
\end{equation}
Here, we have used the inner product notation $\left(\:,\:\right)$,
defined as
\begin{equation}
\left(h\left(\xi\right),g\left(\xi\right)\right):=\int_{0}^{2\pi}h^{*}\left(\xi\right)g\left(\xi\right)d\xi,
\end{equation}
where $h\left(\xi\right)$ and $g\left(\xi\right)$ are complex functions
of $\xi$. We also used the fact that
\begin{equation}
\left(e^{im\xi},e^{in\xi}\right)=\int_{0}^{2\pi}e^{-im\xi}e^{in\xi}d\xi=2\pi\delta_{mn},\label{eq:Orthogonal-Fourier}
\end{equation}
where $\delta_{mn}$ is Kronecker delta, which equals $1$ if 1 if
$m=n$ and $0$ otherwise. The right-hand side of Eq.\,(\ref{eq:inner-product-F0})
can be transformed into
\begin{align}
 & \left(e^{im\xi},\boldsymbol{v}\cdot\nabla f_{\alpha0}+\frac{\boldsymbol{F}_{\alpha}}{m_{\alpha}}\cdot\frac{\partial f_{0\alpha}}{\partial\boldsymbol{v}}\right)=\left(e^{im\xi},\left|\boldsymbol{v}\right|\boldsymbol{e}_{v}\cdot\nabla f_{\alpha0}+\frac{\partial f_{\alpha0}}{\partial\left|\boldsymbol{v}\right|}\boldsymbol{e}_{v}+\frac{1}{\left|\boldsymbol{v}\right|}\frac{\partial f_{\alpha0}}{\partial\xi}\boldsymbol{e}_{\xi}\right)\nonumber \\
 & =\left|\boldsymbol{v}\right|\nabla f_{\alpha0}\cdot\left(e^{im\xi},\boldsymbol{e}_{v}\right)+\frac{\partial f_{\alpha0}}{\partial\left|\boldsymbol{v}\right|}\frac{\boldsymbol{F}_{\alpha}}{m_{\alpha}}\cdot\left(e^{im\xi},\boldsymbol{e}_{v}\right)=\left[\left|\boldsymbol{v}\right|\nabla f_{\alpha0}+\frac{\boldsymbol{F}_{\alpha}}{m_{\alpha}}\frac{\partial f_{\alpha0}}{\partial\left|\boldsymbol{v}\right|}\right]\cdot\left(e^{im\xi},\boldsymbol{e}_{v}\right)\nonumber \\
 & =\left(\hat{\boldsymbol{\mathcal{D}}}_{\alpha}f_{\alpha0}\right)\cdot\left(e^{im\xi},\boldsymbol{e}_{v}\right),\label{eq:50}
\end{align}
where the operator $\hat{\boldsymbol{\mathcal{D}}}_{\alpha}$ is defined
as
\begin{equation}
\hat{\boldsymbol{\mathcal{D}}}_{\alpha}\equiv\left|\boldsymbol{v}\right|\frac{\partial}{\partial\boldsymbol{r}}+\frac{\boldsymbol{F}_{\alpha}}{m_{\alpha}}\frac{\partial}{\partial\left|\boldsymbol{v}\right|}.\label{eq:D-operator}
\end{equation}
Here, $\boldsymbol{e}_{v}$ and $\boldsymbol{e}_{\xi}$ are unit vectors
in the radial and azimuthal directions, respectively, in the polar
coordinate system of velocity space, defined as
\begin{align}
\boldsymbol{e}_{v} & =\boldsymbol{v}/\left|\boldsymbol{v}\right|=\mathrm{cos\xi}\boldsymbol{e}_{x}+\mathrm{sin}\xi\boldsymbol{e}_{y},\\
\boldsymbol{e}_{\xi} & =-\mathrm{sin}\xi\boldsymbol{e}_{x}+\mathrm{cos\xi}\boldsymbol{e}_{y}
\end{align}
In deriving Eq.\,(\ref{eq:50}) we used $\partial f_{\alpha0}/\partial\xi=0$
due to the isotropy of $f_{\alpha0}$ and $\boldsymbol{F}_{\alpha}\cdot\boldsymbol{e}_{v}=0$
as dictated by Eq.\,(\ref{eq:Force1+2}). The last term on the right-hand
side of Eq.\,(\ref{eq:50}) can be directly integrated as 
\begin{equation}
\left(e^{im\xi},\boldsymbol{e}_{v}\right)=\begin{cases}
\pi\left(\boldsymbol{e}_{x}-i\boldsymbol{e}_{y}\right), & m=1,\\
\pi\left(\boldsymbol{e}_{x}+i\boldsymbol{e}_{y}\right), & m=-1,\\
0, & m\neq\pm1.
\end{cases}\label{eq:89}
\end{equation}
Substituting Eq.\,(\ref{eq:50}) into Eq.\,(\ref{eq:50}), and combining
the results with Eq.\,(\ref{eq:inner-product-F0}), we can obtain
\begin{equation}
\delta f_{\alpha m}=\begin{cases}
-\frac{\hat{\boldsymbol{\mathcal{D}}}_{\alpha}f_{0}\cdot\left(\boldsymbol{e}_{x}-i\boldsymbol{e}_{y}\right)}{2\left[\nu_{\alpha1}^{\pm}\left(t,r,\left|v\right|\right)-i\omega\right]}, & m=1,\\
-\frac{\hat{\boldsymbol{\mathcal{D}}}_{\alpha}f_{0}\cdot\left(\boldsymbol{e}_{x}+i\boldsymbol{e}_{y}\right)}{2\left[\nu_{\alpha-1}^{\pm}\left(t,r,\left|v\right|\right)-i\omega\right]}, & m=-1,\\
0, & m\neq\pm1.
\end{cases}\label{eq:delta_f_m}
\end{equation}
The perturbed distribution function $\delta f_{\alpha}$ can be derived
by substituting Eq.\,(\ref{eq:delta_f_m}) into Eq.\,(\ref{eq:f1-Fourier-series})
as 
\begin{align}
 & \delta f_{\alpha}\left(t,r,v\right)=\delta f_{\alpha-1}e^{-i\xi}+\delta f_{\alpha1}e^{i\xi}\nonumber \\
 & =-\frac{\hat{\boldsymbol{\mathcal{D}}}_{\alpha}f_{0}\cdot\left(\boldsymbol{e}_{x}+i\boldsymbol{e}_{y}\right)}{2\left[\nu_{\alpha-1}^{\pm}-i\omega\right]}e^{-i\xi}-\frac{\hat{\boldsymbol{\mathcal{D}}}_{\alpha}f_{0}\cdot\left(\boldsymbol{e}_{x}-i\boldsymbol{e}_{y}\right)}{2\left[\nu_{\alpha1}^{\pm}\left(t,r,\left|v\right|\right)-i\omega\right]}e^{i\xi}\nonumber \\
 & -\frac{\hat{\boldsymbol{\mathcal{D}}}_{\alpha}f_{0}\cdot\boldsymbol{e}_{x}e^{-i\xi}+i\hat{\boldsymbol{\mathcal{D}}}_{\alpha}f_{0}\cdot\boldsymbol{e}_{y}e^{-i\xi}}{2\left[\left(\nu_{\alpha1}^{\pm}\right)^{*}-i\omega\right]}-\frac{\hat{\boldsymbol{\mathcal{D}}}_{\alpha}f_{0}\cdot\boldsymbol{e}_{x}e^{i\xi}-i\hat{\boldsymbol{\mathcal{D}}}_{\alpha}f_{0}\cdot\boldsymbol{e}_{y}e^{i\xi}}{2\left[\nu_{\alpha1}^{\pm}-i\omega\right]}\nonumber \\
 & =-\frac{1}{2}\left[\frac{e^{i\xi}}{\nu_{\alpha1}^{\pm}-i\omega}+\frac{e^{-i\xi}}{\left(\nu_{\alpha1}^{\pm}\right)^{*}-i\omega}\right]\hat{\boldsymbol{\mathcal{D}}}_{\alpha}f_{0}\cdot\boldsymbol{e}_{x}-\frac{1}{2i}\left[\frac{e^{i\xi}}{\nu_{\alpha1}^{\pm}-i\omega}-\frac{e^{-i\xi}}{\left(\nu_{\alpha1}^{\pm}\right)^{*}-i\omega}\right]\hat{\boldsymbol{\mathcal{D}}}_{\alpha}f_{0}\cdot\boldsymbol{e}_{y}.\label{eq:91}
\end{align}

In low-frequency condition, where $\omega\approx0$ (or $\partial\delta f_{\alpha0}/\partial t\approx0$,
i.e., $\delta f_{\alpha}=\delta f_{\alpha}\left(r,v\right)$), and
$\nu_{\alpha1}^{\pm}=\nu_{\alpha1}^{\pm}\left(r,\left|v\right|\right)$
is time-independent, equation (\ref{eq:91}) simplifies to 
\begin{align}
 & \delta f_{\alpha}\left(r,v\right)=-\frac{1}{2}\left[\frac{e^{i\xi}}{\nu_{\alpha1}^{\pm}}+\frac{e^{-i\xi}}{\left(\nu_{\alpha1}^{\pm}\right)^{*}}\right]\hat{\boldsymbol{\mathcal{D}}}_{\alpha}f_{0}\cdot\boldsymbol{e}_{x}-\frac{1}{2i}\left[\frac{e^{i\xi}}{\nu_{\alpha1}^{\pm}}-\frac{e^{-i\xi}}{\left(\nu_{\alpha1}^{\pm}\right)^{*}}\right]\hat{\boldsymbol{\mathcal{D}}}_{\alpha}f_{0}\cdot\boldsymbol{e}_{y}\nonumber \\
 & =-\hat{\boldsymbol{\mathcal{D}}}_{\alpha}f_{0}\cdot\left[\mathrm{Re}\left(\frac{e^{i\xi}}{\nu_{\alpha1}^{\pm}}\right)\boldsymbol{e}_{x}+\mathrm{Im}\left(\frac{e^{i\xi}}{\nu_{\alpha1}^{\pm}}\right)\boldsymbol{e}_{y}\right].\label{eq:92}
\end{align}
By defining
\begin{equation}
\frac{1}{\nu_{\alpha1}^{\pm}\left(r,\left|v\right|\right)}\coloneqq\tau_{\alpha1}\left(r,\left|v\right|\right)\mp i\thinspace\tau_{\alpha2}\left(r,\left|v\right|\right),\label{eq:tau1+tau2}
\end{equation}
and substituting this expression into Eq.\,(\ref{eq:92}), we can
rewrite it in vector form as
\begin{align}
 & \delta f_{\alpha}\left(r,v\right)\nonumber \\
 & =-\hat{\boldsymbol{\mathcal{D}}}_{\alpha}f_{0}\cdot\left\{ \left[\tau_{\alpha1}\mathrm{cos\xi}-\left(\mp\tau_{\alpha2}\right)\mathrm{sin}\xi\right]\boldsymbol{e}_{x}+\left[\tau_{\alpha1}\mathrm{sin}\xi\mp\tau_{\alpha2}\mathrm{cos\xi}\right]\boldsymbol{e}_{y}\right\} \nonumber \\
 & =-\left[\tau_{\alpha1}\mathrm{cos\xi}-\left(\mp\tau_{\alpha2}\right)\mathrm{sin}\xi\right]\hat{\mathcal{D}_{x}}f_{0}+\left[\tau_{\alpha1}\mathrm{sin}\xi\mp\tau_{\alpha2}\mathrm{cos\xi}\right]\hat{\mathcal{D}_{y}}f_{0}\nonumber \\
 & =-\left(\mathrm{cos\xi}\quad\mathrm{sin}\xi\right)\left(\begin{array}{cc}
\tau_{\alpha1} & \mp\tau_{\alpha2}\\
\pm\tau_{\alpha2} & \tau_{\alpha1}
\end{array}\right)\left(\begin{array}{c}
\hat{\mathcal{D}_{x}}f_{0}\\
\hat{\mathcal{D}_{y}}f_{0}
\end{array}\right).\label{eq:94}
\end{align}
To distinguish the perturbed distribution function $\delta f_{\alpha}$
arising from different types of collisions, we denote it as $\delta f_{\alpha}^{\pm}$.
Thus, Eq.\,(\ref{eq:94}) can be rewritten as 
\begin{equation}
\delta f_{\alpha}^{\pm}\left(r,v\right)=-\boldsymbol{e}_{v}\cdot\boldsymbol{M}_{\alpha}^{\pm}\left(r,\left|v\right|\right)\cdot\hat{\boldsymbol{\mathcal{D}}}_{\alpha}f_{\alpha0},\label{eq:delta-f-finally}
\end{equation}
where we define
\begin{equation}
\boldsymbol{M}_{\alpha}^{\pm}\left(r,\left|v\right|\right)=\left(\begin{array}{cc}
\tau_{\alpha1} & \mp\tau_{\alpha2}\\
\pm\tau_{\alpha2} & \tau_{\alpha1}
\end{array}\right).\label{eq:M+-}
\end{equation}
From Eq.\,(\ref{eq:M+-}), it can be observed that 
\begin{equation}
M_{\alpha ij}^{\pm}=M_{\alpha ji}^{\pm},\text{ for }i\ne j.\label{eq:M-off-diagnal}
\end{equation}

\section{typical transport coefficients\label{sec:typical-transport-coefficient}}

We start with Eq.\,(\ref{eq:delta-f-finally}) to calculate the transport
coefficient, applying the vector form of all functions in the following
context. Assuming that the local equilibrium is described by a mono-kinetic
distribution, we have

\begin{equation}
f_{\alpha0}\left(\boldsymbol{r},\left|\boldsymbol{v}\right|\right)=\frac{n_{\alpha0}\left(\boldsymbol{r}\right)}{2\pi\left|\boldsymbol{v}\right|}\delta\left(\left|\boldsymbol{v}\right|-u_{\alpha}\left(\boldsymbol{r}\right)\right)=\frac{n_{\alpha0}\left(\boldsymbol{r}\right)}{2\pi u_{\alpha}\left(\boldsymbol{r}\right)}\delta\left(\left|\boldsymbol{v}\right|-u_{\alpha}\left(\boldsymbol{r}\right)\right),\label{eq:mono-distribution}
\end{equation}
indicating that all particles have the same speed $u\left(\boldsymbol{r}\right)$
at position $\boldsymbol{r}$, but with different directions.. It
is straightforward to verify that $\int f_{\alpha0}\left(\boldsymbol{r},\left|\boldsymbol{v}\right|\right)d^{2}\boldsymbol{v}=n_{\alpha0}\left(\boldsymbol{r}\right)$.

\subsection{Particle transport and diffusion tensor \label{subsec:Particle-transport}}

Suppose the external force field $\boldsymbol{F}_{\alpha}=0$, and
the particle speed is constant, i.e., $u_{\alpha}\left(\boldsymbol{r}\right)=u_{\alpha}$,
but the particle density depends on $\boldsymbol{r}$. Under these
conditions, the operator simplifies to $\hat{\boldsymbol{\mathcal{D}}}=\left|\boldsymbol{v}\right|\partial/\partial\boldsymbol{r}$.
Using Eq.\,(\ref{eq:delta-f-finally}), the perturbed distribution
function $\delta f_{\alpha}^{\pm}$ is then given by
\begin{equation}
\delta f_{\alpha}^{\pm}=-\frac{1}{2\pi}\delta\left(\left|\boldsymbol{v}\right|-u_{\alpha}\right)\boldsymbol{e}_{v}\cdot\boldsymbol{M}_{\alpha}^{\pm}\cdot\nabla n_{\alpha0}\left(\boldsymbol{r}\right).\label{eq:delta-f-diffusion}
\end{equation}
The particle flux of the $\alpha$-species is calculated as
\begin{align}
\boldsymbol{\Gamma}_{\alpha}^{\pm}\left(\boldsymbol{r}\right) & =n_{\alpha}\left(\boldsymbol{r}\right)\boldsymbol{V}_{\alpha}=\int f\boldsymbol{v}d^{2}\boldsymbol{v}\nonumber \\
 & =\int\left(f_{0\alpha}+\delta f_{\alpha}^{\pm}\right)\boldsymbol{v}d^{2}\boldsymbol{v}=\int\delta f_{\alpha}^{\pm}\left|\boldsymbol{v}\right|\boldsymbol{e}_{v}d\boldsymbol{v}\nonumber \\
 & =-\frac{1}{2\pi}\left[\int\left|\boldsymbol{v}\right|\delta\left(\left|\boldsymbol{v}\right|-u_{\alpha}\right)\boldsymbol{e}_{v}\boldsymbol{e}_{v}\cdot\boldsymbol{M}_{\alpha}^{\pm}\left(\boldsymbol{r},\left|\boldsymbol{v}\right|\right)d^{2}\boldsymbol{v}\right]\cdot\nabla n_{\alpha0}\left(\boldsymbol{r}\right)\nonumber \\
 & =-\frac{1}{2\pi}\left\{ \left[\int_{0}^{2\pi}\boldsymbol{e}_{v}\boldsymbol{e}_{v}d\xi\right]\cdot\left[\int_{0}^{+\infty}\boldsymbol{M}_{\alpha}^{\pm}\left(\boldsymbol{r},\left|\boldsymbol{v}\right|\right)\left|\boldsymbol{v}\right|^{2}\delta\left(\left|\boldsymbol{v}\right|-u_{\alpha}\right)d\left|\boldsymbol{v}\right|\right]\right\} \cdot\nabla n_{\alpha0}\left(\boldsymbol{r}\right)\nonumber \\
 & =-\frac{u^{2}}{2}\boldsymbol{M}_{\alpha}^{\pm}\left(\boldsymbol{r},u_{\alpha}\right)\cdot\nabla n_{\alpha0}\left(\boldsymbol{r}\right),\label{eq:particle-flux-1}
\end{align}
where we used 
\begin{align}
 & \int_{0}^{2\pi}\boldsymbol{e}_{v}\boldsymbol{e}_{v}d\xi=\int_{0}^{2\pi}\left(\mathrm{cos\xi}\boldsymbol{e}_{x}+\mathrm{sin}\xi\boldsymbol{e}_{y}\right)\left(\mathrm{cos\xi}\boldsymbol{e}_{x}+\mathrm{sin}\xi\boldsymbol{e}_{y}\right)d\xi\nonumber \\
 & =\int_{0}^{2\pi}\left[\mathrm{cos^{2}\xi}\boldsymbol{e}_{x}\boldsymbol{e}_{x}+\mathrm{cos\xi}\mathrm{sin}\xi\left(\boldsymbol{e}_{x}\boldsymbol{e}_{y}+\boldsymbol{e}_{y}\boldsymbol{e}_{x}\right)\mathrm{sin}^{2}\xi\boldsymbol{e}_{y}\boldsymbol{e}_{y}\right]d\xi\nonumber \\
 & =\pi\left(\boldsymbol{e}_{x}\boldsymbol{e}_{x}+\boldsymbol{e}_{y}\boldsymbol{e}_{y}\right)=\pi\boldsymbol{I},\label{eq:59}
\end{align}
and \textbf{$\boldsymbol{I}$} is the 2D identity tensor.

According to the equipartition theorem of energy $T_{\alpha}/2=\left\langle \frac{1}{2}m_{\alpha}v_{x}^{2}\right\rangle _{\alpha}=\left\langle \frac{1}{2}m_{\alpha}v_{y}^{2}\right\rangle _{\alpha}$,
where $T_{\alpha}$ is the temperature of the $\alpha$-species. We
have set the Boltzmann constant to 1, i.e., $k_{B}=1$, yielding
\begin{equation}
\frac{u_{\alpha}^{2}}{2}=\left\langle \frac{1}{2}\left(v_{x}^{2}+v_{y}^{2}\right)\right\rangle _{\alpha}=\left\langle \frac{\left|\boldsymbol{v}\right|^{2}}{2}\right\rangle _{\alpha}=\frac{T_{\alpha}}{m_{\alpha}}.\label{eq:T/m}
\end{equation}
Substituting Eq.\,(\ref{eq:T/m}) into Eq.\,(\ref{eq:particle-flux-1}),
we finally obtain
\begin{equation}
\boldsymbol{\Gamma}_{\alpha}^{\pm}\left(\boldsymbol{r}\right)=-\boldsymbol{D}_{\alpha}^{\pm}\left(\boldsymbol{r},u_{\alpha}\right)\cdot\nabla n_{\alpha0}\left(\boldsymbol{r}\right),\label{eq:particle-flux-2}
\end{equation}
where
\begin{equation}
\boldsymbol{D}_{\alpha}^{\pm}\left(\boldsymbol{r},u_{\alpha}\right)=\frac{T_{\alpha}}{m_{\alpha}}\boldsymbol{M}_{\alpha}^{\pm}\left(\boldsymbol{r},u_{\alpha}\right)\label{eq:diffusion tensor}
\end{equation}
is the Hall (or odd) diffusion tensor. Its off-diagonal elements are
antisymmetric due to Eq.\,(\ref{eq:M-off-diagnal}).

Since $\boldsymbol{M}_{\alpha}^{\pm}$ is related to the direction
of the magnetic field within the magnetic region (see Eqs.\,(\ref{eq:Km^(pm)}),
(\ref{eq:nu_m(pm)}), (\ref{eq:tau1+tau2}) and (\ref{eq:M+-})),
when the magnetic field in the MDA changes direction, the gyrofrequency
$\omega_{c\alpha}$ will change its sign, and $\nu_{\alpha1}^{\pm}\left(r,\left|v\right|\right)$
will transform into their conjugates, i.e., $\nu_{\alpha1}^{\pm}\rightarrow\left(\nu_{\alpha1}^{\pm}\right)^{*}$
(or $\tau_{\alpha2}\rightarrow-\tau_{\alpha2}$). As a result, we
have
\begin{equation}
M_{\alpha ij}^{\pm}\left[\boldsymbol{B}\right]=M_{\alpha ji}^{\mp}\left[-\boldsymbol{B}\right].
\end{equation}
Consequently, the diffusion tensor will satisfy the relation
\begin{equation}
D_{\alpha ij}^{\pm}\left[\boldsymbol{B}\right]=D_{\alpha ji}^{\mp}\left[-\boldsymbol{B}\right],\label{eq:107}
\end{equation}
which is consistent with Onsager's reciprocal relations for kinetic
coefficients.

Assume that $n_{\alpha0}$ varies slowly over time, ensuring that
each state is in a quasi-stationary condition, and Eq.\,(\ref{eq:particle-flux-2})
continues to hold. Substituting Eq.\,(\ref{eq:particle-flux-2})
into Eq.\,(\ref{eq:continuity-equation}), we obtain the Hall diffusion
equation

\begin{equation}
\frac{\partial n_{\alpha0}}{\partial t}-\nabla\cdot\left[\boldsymbol{D}_{\alpha}^{\pm}\left(\boldsymbol{r},u_{\alpha}\right)\cdot\nabla n_{\alpha0}\right]=0,\label{eq:diffusion-equation-0}
\end{equation}
where the perturbed density $\delta n_{\alpha}=n_{\alpha}-n_{\alpha0}$
is neglected. For stationary and uniform MDA (or $n_{D}$ is constant),
$\tau_{\alpha1}$ and $\tau_{\alpha2}$ is then independent of $\boldsymbol{r}$,
consequently, the diffusion tensor is a constant tensor. Equation
(\ref{eq:diffusion-equation-0}) is then reduced to
\begin{equation}
\frac{\partial n_{\alpha0}}{\partial t}-\boldsymbol{D}_{\alpha}^{\pm}:\nabla\nabla n_{\alpha0}=0,\label{eq:diffusion-equation-1}
\end{equation}
Substituting Eqs.\,(\ref{eq:M+-}) and (\ref{eq:diffusion tensor})
into Eq.\,(\ref{eq:diffusion-equation-1}), Eq.\,(\ref{eq:diffusion-equation-1})
can be equivalently transformed into
\begin{equation}
\frac{\partial n_{\alpha0}}{\partial t}-\frac{T_{\alpha}\tau_{\alpha1}}{m_{\alpha}}\nabla^{2}n_{\alpha0}=0.\label{eq:diffusion-equation-2}
\end{equation}

To simulate the diffusion properties of particles using Eq.\,(\ref{eq:diffusion-equation-1}),
it is convenient to nondimensionalize the problem using the following
substitutions:
\begin{align}
 & \tilde{n}_{\alpha}=\left(\pi\left|r_{D}\right|^{2}\right)n_{\alpha},\label{eq:n=00005Ctilde}\\
 & \tilde{t}=\left|\omega_{c\alpha}\right|t,\;\tilde{\boldsymbol{r}}=\frac{\boldsymbol{r}}{\left|r_{D}\right|}.\label{eq:t=00005Ctilde+x=00005Ctilde}
\end{align}
Substituting these nondimensional quantities into Eq.\,(\ref{eq:diffusion-equation-1}),
we have
\begin{equation}
\frac{\partial\tilde{n}_{\alpha0}}{\partial\tilde{t}}-\tilde{\boldsymbol{D}}_{\alpha}^{\pm}:\tilde{\nabla}\tilde{\nabla}\tilde{n}_{\alpha0}=0,
\end{equation}
where $\tilde{\nabla}\equiv\partial/\partial\tilde{\boldsymbol{x}}$
and
\begin{equation}
\tilde{\boldsymbol{D}}_{\alpha}^{\pm}=\frac{\boldsymbol{D}_{\alpha}^{\pm}}{\left|\omega_{c\alpha}\right|\left|r_{D}\right|^{2}}=\frac{1}{2}\gamma_{\alpha}\left|\omega_{c\alpha}\right|\left(\begin{array}{cc}
\tau_{\alpha1} & \mp\tau_{\alpha2}\\
\pm\tau_{\alpha2} & \tau_{\alpha1}
\end{array}\right)
\end{equation}
nondimensionalized diffusion tensor. In this study, we simulate the
behavior of diffusion equations using the COMSOL Multiphysics software.
We examine three distinct types of diffusion equations, each corresponding
to a different diffusion tensor $\tilde{\boldsymbol{D}}_{\alpha}^{-}$,
$\tilde{\boldsymbol{D}}_{\alpha}^{+}$ and $\tilde{\boldsymbol{D}}_{\alpha}^{0}$,
the simulation results are shown in Fig.\,\ref{fig:particle-diffusion}(a-c),
(d-f), and (g-i). Here, $\tilde{\boldsymbol{D}}_{\alpha}^{0}$ denotes
the normal diffusion tensor, which is obtained by setting $\tau_{\alpha2}=0$
in $\tilde{\boldsymbol{D}}_{\alpha}^{\pm}$ . This results in
\begin{equation}
\tilde{\boldsymbol{D}}_{\alpha}^{0}=\frac{1}{2}\gamma_{\alpha}\left|\omega_{c\alpha}\right|\left(\begin{array}{cc}
\tau_{\alpha1} & 0\\
0 & \tau_{\alpha1}
\end{array}\right).
\end{equation}
The simulations are performed within a $1\times2$ rectangular domain.
We set the parameters as $\tau_{\alpha1}=2/\gamma_{\alpha}\left|\omega_{c\alpha}\right|$
and $\tau_{\alpha2}=\pi/\left(2\gamma_{\alpha}\left|\omega_{c\alpha}\right|\right)$.
Under these conditions, the diffusion tensors are defined as
\begin{equation}
\tilde{\boldsymbol{D}}_{\alpha}^{+}=\left(\begin{array}{cc}
1 & -\frac{\pi}{4}\\
\frac{\pi}{4} & 1
\end{array}\right),\:\tilde{\boldsymbol{D}}_{\alpha}^{-}=\left(\begin{array}{cc}
1 & \frac{\pi}{4}\\
-\frac{\pi}{4} & 1
\end{array}\right),\:\tilde{\boldsymbol{D}}_{\alpha}^{0}=\left(\begin{array}{cc}
1 & 0\\
0 & 1
\end{array}\right).
\end{equation}
The boundary conditions are specified as follows: at the top and bottom
boundaries, a flux condition is imposed with $\boldsymbol{\Gamma}_{\alpha}^{\pm}\text{ and }\boldsymbol{\Gamma}_{\alpha}^{0}=0$,
ensuring that there is no net particle flux across these boundaries,
where $\boldsymbol{\Gamma}_{\alpha}^{0}=-\tilde{\boldsymbol{D}}_{\alpha}^{0}\cdot\nabla n_{\alpha0}$.
Dirichlet boundary conditions are applied at the left and right boundaries
to fix the particle density: at the left boundary $n\left(t,x=0,y\right)=n_{0}$
with $n_{0}=10$, and at the right boundary $n\left(t,x=2,y\right)=0$.
The initial particle density is given by a Gaussian distribution $n\left(t=0,x,y\right)=n_{0}e^{-x^{2}/L^{2}}$,
where $L=0.08$.

Figure \ref{fig:one-way preferential diffusion}(a) and (c) show the
simulation results for the stationary equation of electrons in a rectangular
region consisting of two 1\texttimes 4 rectangles, with boundary conditions
identical to those in Fig.\,\ref{fig:particle-diffusion}. We observe
that the electrons exhibit one-way preferential diffusion. The results
indicate that the magnetic field asymmetry plays a crucial role in
controlling electron diffusion. When the magnetic field is oriented
with $B_{z}>0$ in one domain and $B_{z}<0$ in the other, the diffusion
behavior becomes highly directional, leading to one-way preferential
diffusion. This can be attributed to the antisymmetric nature of the
Hall diffusion tensor, which introduces an effective barrier to electron
movement in one direction, while facilitating it in the other.

\begin{figure}
\includegraphics[scale=0.7]{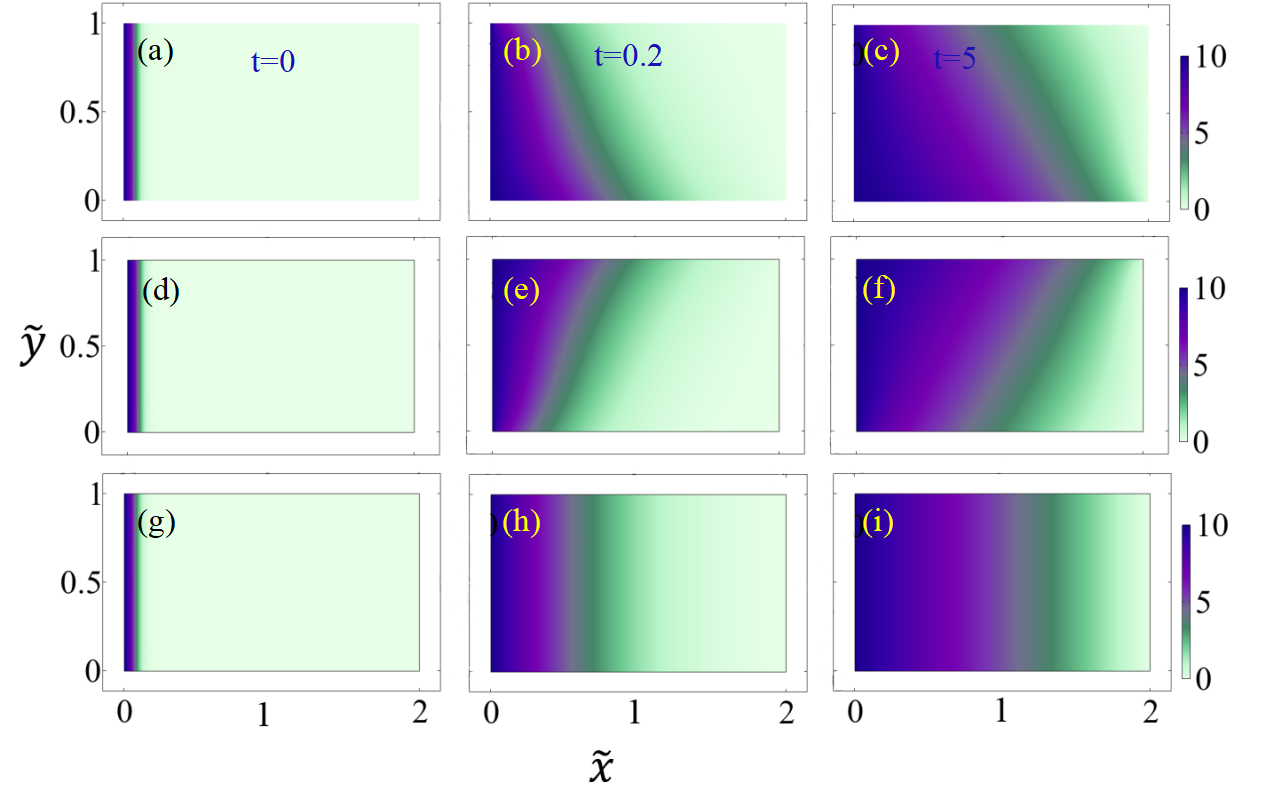}

\caption{\label{fig:particle-diffusion} Evolution of particle density over
time for three different diffusion tensors. Particles diffuse from
left to right. (a)-(c) show the evolution of particle density over
time for the diffusion tensor $\tilde{\boldsymbol{D}}_{\alpha}^{-}$,
where particles tend to exhibit bottom-biased motion. (d)-(f) depict
the diffusion tensor $\tilde{\boldsymbol{D}}_{\alpha}^{+}$, where
particles tend to exhibit top-biased motion. (g)-(i) illustrate the
diffusion tensor $\tilde{\boldsymbol{D}}_{\alpha}^{0}$, where particles
diffuse from left to right with no preferential direction toward the
top or bottom.}
\end{figure}

\begin{figure}
\includegraphics[scale=0.3]{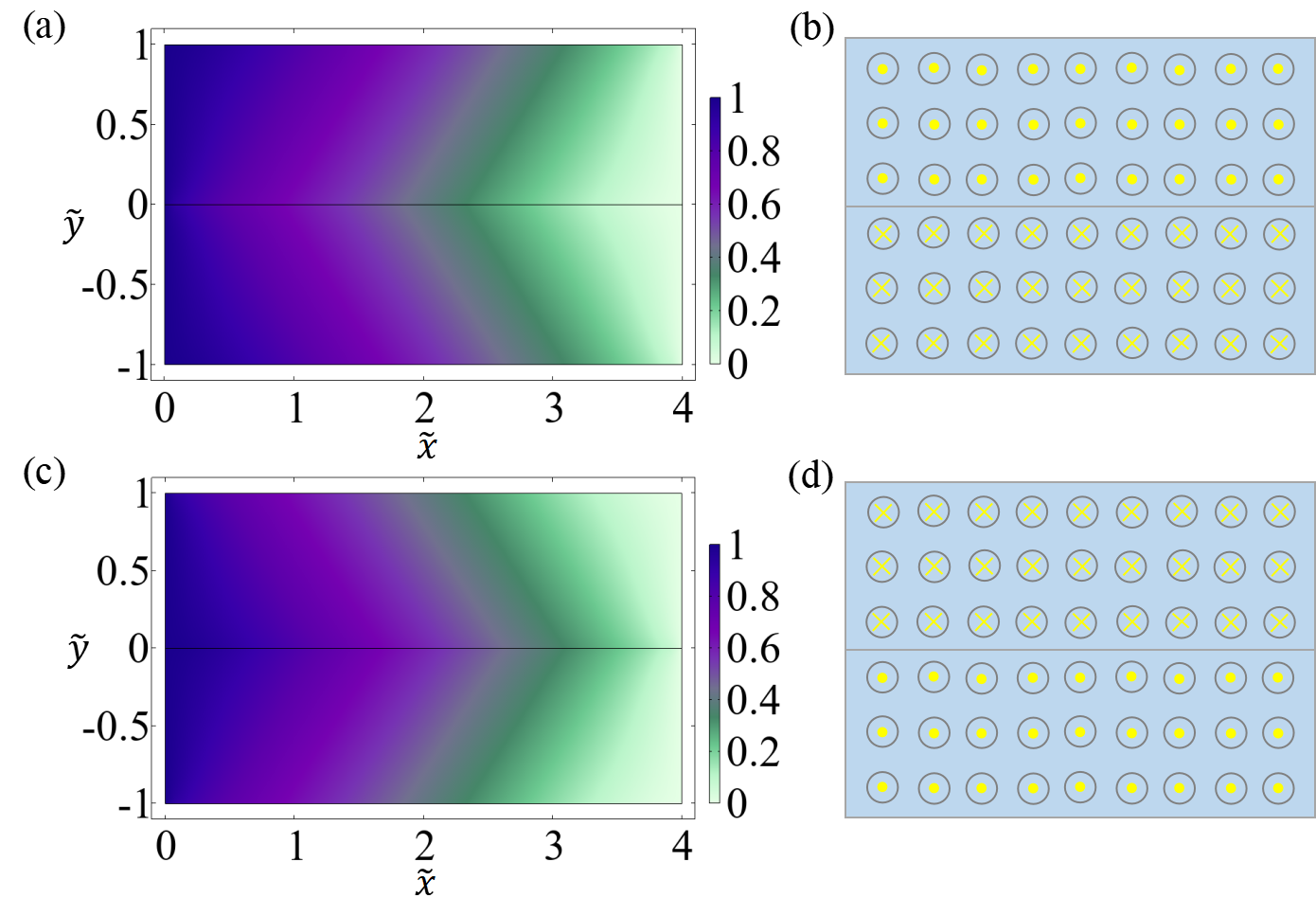}

\caption{\label{fig:one-way preferential diffusion} Electrons exhibit a one-way
preferential diffusion phenomenon at the interface of two MDAs. (a)
When electrons diffuse within the MDA shown in (b), where $B_{z}>0$
at the top and $B_{z}<0$ at the bottom, they are obstructed at the
interface; (c) When electrons diffuse within the MDA shown in (d),
where $B_{z}<0$ at the top and $B_{z}>0$ at the bottom, they pass
through the interface more easily.}
\end{figure}

\subsection{Electric transport and the Hall conductivity tensor \label{subsec:Electric-transport}}

Suppose the system is under a uniform electric field $\boldsymbol{E}=E_{x}\boldsymbol{e}_{x}+E_{y}\boldsymbol{e}_{y}$,
and the force field acting on $\alpha$-species is thus given by $\boldsymbol{F}_{\alpha}=q_{\alpha}\boldsymbol{E}$.
Assuming the system comprises two species of particles, electrons
and ions. The speed of the particles is constant for each species,
which is represented $u_{\alpha}\left(\boldsymbol{r}\right)=u_{\alpha}$.
Given that electrons move much faster than ions due to their lighter
mass, we assume the ions are stationary, i.e., $u_{i}=0$, where the
subscript $i$ refers to ions. The particle density is uniform throughout
the system, expressed as $n_{\alpha0}\left(\boldsymbol{r}\right)=n_{\alpha0}$.
The system satisfies the neutrality condition at equilibrium, which
is
\begin{equation}
\sum_{\alpha}q_{\alpha}n_{\alpha0}=-en_{e0}+q_{i}n_{i0}=0.
\end{equation}
In this scenario, the operator becomes $\hat{\boldsymbol{\mathcal{D}}}_{\alpha}=\left|\boldsymbol{v}\right|\partial/\partial\boldsymbol{r}+\left(q_{\alpha}\boldsymbol{E}/m_{\alpha}\right)\partial/\partial\left|\boldsymbol{v}\right|$,
and the perturbed distribution function $\delta f_{\alpha}^{\pm}$
is then given by

\begin{align}
\delta f_{e}^{\pm} & =\frac{en_{e0}}{2\pi m_{e}u_{e}}\frac{\partial\delta\left(\left|\boldsymbol{v}\right|-u_{e}\right)}{\partial\left|\boldsymbol{v}\right|}\boldsymbol{e}_{v}\cdot\boldsymbol{M}_{e}^{\pm}\cdot\boldsymbol{E},\label{eq:108}\\
\delta f_{i}^{\pm} & =0.\label{eq:109}
\end{align}
The electric current is calculated as

\begin{align}
\boldsymbol{J}^{\pm} & =\sum_{\alpha}q_{\alpha}\int f_{\alpha}\boldsymbol{v}d^{2}\boldsymbol{v}=\sum_{\alpha}q_{\alpha}\int\delta f_{\alpha}^{\pm}\boldsymbol{v}d^{2}\boldsymbol{v}=-e\int\delta f_{e}^{\pm}\boldsymbol{v}d^{2}\boldsymbol{v}\nonumber \\
 & =-\frac{e^{2}n_{e0}}{2\pi m_{e}u_{e}}\int\frac{\partial\delta\left(\left|\boldsymbol{v}\right|-u_{e}\right)}{\partial\left|\boldsymbol{v}\right|}\boldsymbol{v}\boldsymbol{e}_{v}\cdot\boldsymbol{M}_{e}^{\pm}\left(\boldsymbol{r},\left|\boldsymbol{v}\right|\right)\cdot\boldsymbol{E}d^{2}\boldsymbol{v}\nonumber \\
 & =-\frac{e^{2}n_{e0}}{2\pi m_{e}u_{e}}\int\frac{\partial\delta\left(\left|\boldsymbol{v}\right|-u_{e}\right)}{\partial\left|\boldsymbol{v}\right|}\left|\boldsymbol{v}\right|\boldsymbol{e}_{v}\boldsymbol{e}_{v}\cdot\boldsymbol{M}_{e}^{\pm}\left(\boldsymbol{r},\left|\boldsymbol{v}\right|\right)d^{2}\boldsymbol{v}\cdot\boldsymbol{E}\nonumber \\
 & =-\frac{e^{2}n_{e0}}{2\pi m_{e}u_{e}}\left[\int_{0}^{2\pi}\boldsymbol{e}_{v}\boldsymbol{e}_{v}d\xi\right]\cdot\left[\int_{0}^{+\infty}\frac{\partial\delta\left(\left|\boldsymbol{v}\right|-u_{e}\right)}{\partial\left|\boldsymbol{v}\right|}\boldsymbol{M}_{\alpha}^{\pm}\left(\boldsymbol{r},\left|\boldsymbol{v}\right|\right)\left|\boldsymbol{v}\right|^{2}d\left|\boldsymbol{v}\right|\right]\cdot\boldsymbol{E}\nonumber \\
 & =\frac{e^{2}n_{e0}}{2\pi m_{e}u_{e}}\left[\pi\boldsymbol{I}\right]\cdot\frac{\partial}{\partial\left|\boldsymbol{v}\right|}\mid_{\left|\boldsymbol{v}\right|=u_{e}}\left[\boldsymbol{M}_{\alpha}^{\pm}\left(\boldsymbol{r},\left|\boldsymbol{v}\right|\right)\left|\boldsymbol{v}\right|^{2}\right]\cdot\boldsymbol{E}\nonumber \\
 & =\frac{e^{2}n_{e0}}{2m_{e}u_{e}}\frac{\partial}{\partial u_{e}}\left[u_{e}^{2}\boldsymbol{M}_{\alpha}^{\pm}\left(\boldsymbol{r},u_{e}\right)\right]\cdot\boldsymbol{E},\label{eq:121-1}
\end{align}
where we used the following equation 
\begin{equation}
f\left(x\right)\delta'\left(x\right)=-f'\left(0\right)\delta\left(x\right)+f\left(0\right)\delta'\left(x\right).\label{eq:delta-property}
\end{equation}
Equation (\ref{eq:121-1}) can be rewritten as 
\begin{equation}
\boldsymbol{J}^{\pm}=\boldsymbol{\sigma}_{e}^{\pm}\cdot\boldsymbol{E},
\end{equation}
where 
\begin{equation}
\boldsymbol{\sigma}_{e}^{\pm}=\frac{e^{2}n_{e0}}{2m_{e}u_{e}}\frac{\partial}{\partial u_{e}}\left[u_{e}^{2}\boldsymbol{M}_{\alpha}^{\pm}\left(\boldsymbol{r},u_{e}\right)\right]\label{eq:conductivity tensor}
\end{equation}
is the Hall conductivity tensor of electrons. From Eq.\,(\ref{eq:conductivity tensor}),
we can see that 
\begin{equation}
\sigma_{eij}^{\pm}=-\sigma_{eji}^{\pm},\text{ for }i\ne j.\label{eq:125}
\end{equation}

\subsection{Heat transport \label{subsec:Heat-transport}}

As discussed in Subsec.\,\ref{subsec:Electric-transport}, we consider
only two species in our system: electrons and ions. The ions are stationary
($\boldsymbol{v}_{i}=0$, or in other words, they are cold ions).
The electron microscopic speed $u_{e}\left(\boldsymbol{r}\right)$
depends on the position $\boldsymbol{r}$, which implies that the
electron temperature and particle density is not uniform at equilibrium.
Assuming there is no equilibrium flow and no external force field,
we have $\boldsymbol{V}_{\alpha}=0$ and $\boldsymbol{F}_{\alpha}=0$.
Therefore, the operator becomes $\hat{\boldsymbol{\mathcal{D}}}=\left|\boldsymbol{v}\right|\partial/\partial\boldsymbol{r}$.
The perturbed distribution function $\delta f_{e}^{\pm}$ is then
calculated as
\begin{equation}
\delta f_{e}^{\pm}\left(\boldsymbol{r},\boldsymbol{v}\right)=-\frac{1}{2\pi}\boldsymbol{e}_{v}\cdot\boldsymbol{M}_{e}^{\pm}\left(\boldsymbol{r},\left|\boldsymbol{v}\right|\right)\cdot\frac{\partial}{\partial\boldsymbol{r}}\left[n_{e0}\left(\boldsymbol{r}\right)\delta\left(\left|\boldsymbol{v}\right|-u_{e}\left(\boldsymbol{r}\right)\right)\right].\label{eq:126}
\end{equation}
The electron stress tensor at equilibrium is calculated as follows
\begin{align}
\boldsymbol{p}_{e0} & \equiv m_{e}n_{e0}\left(\boldsymbol{r}\right)\left\langle \boldsymbol{w}_{e}\boldsymbol{w}_{e}\right\rangle _{e}=m_{e}n_{e}\left(\boldsymbol{r}\right)\left\langle \boldsymbol{v}\boldsymbol{v}\right\rangle _{e}\nonumber \\
 & =m_{e}n_{e0}\left(\boldsymbol{r}\right)\left\langle v_{x}^{2}\boldsymbol{e}_{x}\boldsymbol{e}_{x}+v_{x}v_{y}\left(\boldsymbol{e}_{x}\boldsymbol{e}_{y}+\boldsymbol{e}_{y}\boldsymbol{e}_{x}\right)+v_{y}^{2}\boldsymbol{e}_{y}\boldsymbol{e}_{y}\right\rangle _{e}\nonumber \\
 & =n_{e0}\left(\boldsymbol{r}\right)\left\langle \frac{1}{2}m_{e}\left|\boldsymbol{v}\right|^{2}\right\rangle _{e}\boldsymbol{I}=p_{e0}\left(\boldsymbol{r}\right)\boldsymbol{I},\label{eq:stress-tensor}
\end{align}
where 
\begin{equation}
p_{e0}\left(\boldsymbol{r}\right)=n_{e0}\left(\boldsymbol{r}\right)\left\langle \frac{1}{2}m_{e}\left|\boldsymbol{v}\right|^{2}\right\rangle _{e}=n_{e0}\left(\boldsymbol{r}\right)\left(\frac{1}{2}m_{e}u_{e}^{2}\left(\boldsymbol{r}\right)\right)\label{eq:pressure}
\end{equation}
is the electron pressure. In deriving Eq.\,(\ref{eq:stress-tensor}),
we used the relations
\begin{equation}
\left\langle v_{x}v_{y}\right\rangle _{\alpha}=0,\;\left\langle v_{x}^{2}\right\rangle _{\alpha}=\left\langle v_{y}^{2}\right\rangle _{\alpha}=\left\langle \frac{\left|\boldsymbol{v}\right|^{2}}{2}\right\rangle _{\alpha}.
\end{equation}
Using Eq.\,(\ref{eq:T/m}), the electron temperature is then given
by 
\begin{equation}
T_{e0}\left(\boldsymbol{r}\right)=\left\langle m_{e}\left|\boldsymbol{v}\right|^{2}/2\right\rangle _{e}=m_{e}u_{e}^{2}\left(\boldsymbol{r}\right)/2,\label{eq:electron-Temperture}
\end{equation}
and consequently, we have
\begin{equation}
\nabla u_{e}=\frac{1}{m_{e}u_{e}}\nabla T_{e0}.\label{eq:120}
\end{equation}
Substituting Eq.\,(\ref{eq:electron-Temperture}) into Eq.\,(\ref{eq:pressure}),
we obtain 
\begin{equation}
p_{e0}\left(\boldsymbol{r}\right)=n_{e0}\left(\boldsymbol{r}\right)T_{e0}\left(\boldsymbol{r}\right)\label{eq:132}
\end{equation}
which is the equation of state of ideal gas. Given that $\boldsymbol{V}_{\alpha0}=0$,
the scattering force $\boldsymbol{R}_{\alpha}\left(t,\boldsymbol{r}\right)$
is zero. Applying Eq.\,(\ref{eq:momentum-equation}), we have the
force balance equation 
\begin{equation}
\nabla p_{e0}=0.\label{eq:force-balance-1}
\end{equation}
Substituting Eq.\,(\ref{eq:132}) into Eq.\,$\left(\ref{eq:force-balance-1}\right),$we
obtain
\begin{equation}
\nabla n_{e0}=-\frac{n_{e0}}{T_{e0}}\nabla T_{e0}=-\frac{2n_{e0}}{m_{e}u_{e}^{2}}\nabla T_{e0},\label{eq:force-balance-2}
\end{equation}
where we used Eq.\,(\ref{eq:electron-Temperture}). By employing
Eq.\,(\ref{eq:heat-flux}), the electron heat flux $\boldsymbol{q}_{e}^{\pm}$
is determined as
\begin{align}
 & \boldsymbol{q}_{e}^{\pm}=\frac{1}{2}m_{e}n_{e0}\left\langle \boldsymbol{w}_{e}^{2}\boldsymbol{w}_{e}\right\rangle _{e}=\frac{1}{2}m_{\alpha}n_{\alpha}\left\langle \boldsymbol{v}^{2}\boldsymbol{v}\right\rangle _{\alpha}=\frac{1}{2}m_{e}\int\delta f_{e}^{\pm}\boldsymbol{v}^{2}\boldsymbol{v}d^{2}\boldsymbol{v}\nonumber \\
 & =-\frac{m_{e}}{4\pi}\int\left\{ \boldsymbol{e}_{v}\boldsymbol{e}_{v}\cdot\boldsymbol{M}_{e}^{\pm}\left(\boldsymbol{r},\left|\boldsymbol{v}\right|\right)\cdot\left[\left(\nabla n_{e0}\right)\delta\left(\left|\boldsymbol{v}\right|-u_{e}\right)-n_{e0}\left(\nabla u_{e}\right)\frac{\partial\delta\left(\left|\boldsymbol{v}\right|-u_{e}\right)}{\partial\left|\boldsymbol{v}\right|}\right]\right\} \left|\boldsymbol{v}\right|^{3}d^{2}\boldsymbol{v}\nonumber \\
 & =-\frac{m_{e}}{4\pi}\left[\int_{0}^{2\pi}\boldsymbol{e}_{v}\boldsymbol{e}_{v}d\xi\right]\cdot\left[\int_{0}^{+\infty}\left|\boldsymbol{v}\right|^{4}\boldsymbol{M}_{e}^{\pm}\left(\boldsymbol{r},\left|\boldsymbol{v}\right|\right)\delta\left(\left|\boldsymbol{v}\right|-u_{e}\right)d\left|\boldsymbol{v}\right|\right]\cdot\left(\nabla n_{e0}\right)\nonumber \\
 & +\frac{m_{e}}{4\pi}\left[\int_{0}^{2\pi}\boldsymbol{e}_{v}\boldsymbol{e}_{v}d\xi\right]\cdot\left[\int_{0}^{+\infty}\left|\boldsymbol{v}\right|^{4}\boldsymbol{M}_{e}^{\pm}\left(\boldsymbol{r},\left|\boldsymbol{v}\right|\right)\frac{\partial\delta\left(\left|\boldsymbol{v}\right|-u_{e}\right)}{\partial\left|\boldsymbol{v}\right|}d\left|\boldsymbol{v}\right|\right]\cdot n_{e0}\left(\nabla u_{e}\right)\nonumber \\
 & =-\frac{m_{e}u_{e}^{4}}{4}\boldsymbol{M}_{e}^{\pm}\left(\boldsymbol{r},u_{e}\right)\cdot\left(\nabla n_{e0}\right)-\frac{m_{e}n_{e0}}{4}\frac{\partial}{\partial u_{e}}\left[u_{e}^{4}\boldsymbol{M}_{e}^{\pm}\left(\boldsymbol{r},u_{e}\right)\right]\left(\nabla u_{e}\right).\label{eq:135}
\end{align}
The expression for the electron heat flux (\ref{eq:135}) can be reformulated
as
\begin{equation}
\boldsymbol{q}_{e}^{\pm}=-\boldsymbol{\kappa}_{e}^{\pm}\cdot\nabla T_{e0}
\end{equation}
by substituting Eqs.\,(\ref{eq:120}) and (\ref{eq:force-balance-2})
into Eq.\,(\ref{eq:135}). Here,
\begin{equation}
\boldsymbol{\kappa}_{e}^{\pm}=\frac{n_{e0}u_{e}}{4}\frac{\partial}{\partial u_{e}}\left[u_{e}^{2}\boldsymbol{M}_{e}^{\pm}\left(\boldsymbol{r},u_{e}\right)\right].\label{eq:thermal conductivity}
\end{equation}
represents the thermal Hall conductivity tensor. Similarly to Eq.\,(\ref{eq:125}),
we have 
\begin{equation}
\kappa_{eij}^{\pm}=\kappa_{eji}^{\pm},\text{ for }i\ne j.
\end{equation}
Since the macroscopic velocity $\boldsymbol{V}_{e}=0$, the energy
equation (\ref{eq:energy-equation}) simplifies to
\begin{equation}
\nabla\cdot\left(\boldsymbol{\kappa}_{e}^{\pm}\cdot\nabla T_{e0}\right)=0\label{eq:139}
\end{equation}
which corresponds to the stationary thermal transport equation.

By comparing Eqs.\,(\ref{eq:conductivity tensor}) and (\ref{eq:thermal conductivity}),
we find that
\begin{equation}
\boldsymbol{\kappa}_{e}^{\pm}=\frac{T_{e}}{e^{2}}\boldsymbol{\sigma}_{e}^{\pm},
\end{equation}
which corresponds to the Wiedemann--Franz law with the Lorenz number
$L=1$.

\section{Conclusions}

In this work, we explored the concept of metafields, a class of materials
that utilize local magnetic fields as the fundamental repeating units,
offering advantages over traditional metamaterials. Unlike conventional
metamaterials, where the repeating elements consist of fixed atomic
or molecular structures, metafields provide real-time tunability through
the external control of electric currents. This dynamic adaptability
opens new possibilities for controlling transport phenomena and manipulating
particle behavior.

We focused on a specific metafield system: the magnetic disk array
(MDA), which consists of magnetic disks (MDs) generating uniform magnetic
fields perpendicular to their surfaces. By analyzing the transport
properties of charged particles in the MDA, we explored various Hall
transport phenomena, including Hall diffusivity, Hall conductivity,
and thermal Hall effects. These Hall effects arise due to the antisymmetric
nature of the transport tensors, where longitudinal gradients can
induce transverse fluxes.

Using a combination of complex variable formulation, perturbation
methods, and Fourier analysis, we derived the key transport coefficients
such as the Hall diffusion tensor, Hall conductivity tensor, and thermal
Hall conductivity tensor. These analytical methods provided efficient
and straightforward solutions for understanding particle behavior
in the metafield system. Additionally, through simulations, we demonstrated
the occurrence of a one-way preferential diffusion phenomenon at the
interface of two MDAs with opposing field directions, revealing the
system\textquoteright s unique directional transport properties.

Overall, this study highlights the potential of metafields, particularly
MDAs, in advancing the design of tunable materials with precise control
over particle dynamics. The concept of metafields provides a novel
platform for further exploration in both theoretical and applied physics.
\begin{acknowledgments}
This work was supported by National Natural Science Foundation of
China (NSFC-12304138 and 12005141) and Natural Science Foundation
of Anhui Province of China (2308085QA12).
\end{acknowledgments}

\appendix

\section{proofs of some formulas \label{sec:proofs}}

The detail proofs of Eq.\,(\ref{eq:trajectory}) are given by follows.
We first separate the variables $v$ and $t$ of second equation of
Eq.\,(\ref{eq:Lorentz-force-complex}) to get 
\begin{equation}
\frac{dv}{v}=i\omega_{c}dt.\label{eq:dv/v}
\end{equation}
Equation (\ref{eq:dv/v}) can be directly integrated as
\begin{equation}
\mathrm{ln}v=i\omega_{c}t+\mathrm{const}
\end{equation}
or 
\begin{equation}
v\left(t\right)=v_{0}e^{i\omega_{c}t},\label{eq:v=00003Dv_0*exp(iw_ct)}
\end{equation}
where we have used the initial condition $v\left(0\right)=v_{0}$.
Using the first equation of Eq.\,(\ref{eq:Lorentz-force-complex})
and make another integration for Eq.\,(\ref{eq:v=00003Dv_0*exp(iw_ct)})
again, we can directly get Eq.\,(\ref{eq:trajectory}).

We now give the derivations of Eqs.\,(\ref{eq:e^alpha})-(\ref{eq:e^theta}).
The conjugation of Eq.\,(\ref{eq:basic formular}) is 
\begin{equation}
z^{*}=\frac{e^{-i\alpha}-1}{e^{-i\theta}-1}=\frac{\left(1-e^{i\alpha}\right)e^{i\theta}}{\left(1-e^{i\theta}\right)e^{i\alpha}}=z\frac{e^{i\theta}}{e^{i\alpha}},
\end{equation}
which can be also rewritten as 
\begin{equation}
z^{*}e^{i\alpha}=ze^{i\theta}.\label{eq:z*theta~zalpha}
\end{equation}
Equation (\ref{eq:z*theta~zalpha}) can be transformed into 
\begin{equation}
z^{*}e^{i\alpha}-z=z\left(e^{i\theta}-1\right)=e^{i\alpha}-1,\label{eq:proof-e^alpha}
\end{equation}
where Eq.\,(\ref{eq:basic formular}) is used in the last step. Equation
(\ref{eq:e^alpha}) can be directly derived using Eq.\,(\ref{eq:proof-e^alpha}).
Equation (\ref{eq:e^theta}) can be similarly derived by using 
\begin{equation}
\frac{e^{i\theta}-1}{e^{i\alpha}-1}=\frac{1}{z}.
\end{equation}

The proof of Eq.\,(\ref{eq:R dot V =00003D0}) is as follows
\begin{align}
0 & =\int\frac{1}{2}m_{\alpha}v^{2}\left(\frac{\partial f_{\alpha}}{\partial t}\right)_{c}d^{2}\boldsymbol{v}\nonumber \\
 & =\int\frac{1}{2}m_{\alpha}\left(w_{\alpha}+V_{\alpha}\right)^{2}\left(\frac{\partial f_{\alpha}}{\partial t}\right)_{c}d^{2}\boldsymbol{v}\nonumber \\
 & =\frac{1}{2}m_{\alpha}V_{\alpha}^{2}\int\left(\frac{\partial f_{\alpha}}{\partial t}\right)_{c}d^{2}\boldsymbol{v}+m_{\alpha}\boldsymbol{V}_{\alpha}\cdot\int\boldsymbol{w}\left(\frac{\partial f_{\alpha}}{\partial t}\right)_{c}d^{2}\boldsymbol{v}+\frac{1}{2}m_{\alpha}\int w_{\alpha}^{2}\left(\frac{\partial f_{\alpha}}{\partial t}\right)_{c}d^{2}\boldsymbol{v}\nonumber \\
 & =\boldsymbol{R}_{\alpha}\cdot\boldsymbol{V}_{\alpha}.\label{eq:proof-R dot V =00003D0}
\end{align}
Here, we used
\begin{equation}
Q_{\alpha}\equiv\frac{1}{2}m_{\alpha}\int w_{\alpha}^{2}\left(\frac{\partial f_{\alpha}}{\partial t}\right)_{c}d^{2}\boldsymbol{v}=0
\end{equation}
because the background MDA does not change the internal energy during
the collision process.

The proof of Eq.\,(\ref{eq:energy-equation}) proceeds as follows.
By combining Eqs.\,(\ref{eq:first-moment}) and (\ref{eq:collision-invariant-energy})
and substituting $\psi\left(\boldsymbol{v}\right)=m_{\alpha}\boldsymbol{v}^{2}/2$
into Eq.\,(\ref{eq:general-average-Eq}), we obtain
\begin{equation}
\frac{\partial}{\partial t}\left(\frac{1}{2}m_{\alpha}n_{\alpha}\boldsymbol{V}_{\alpha}^{2}+U_{\alpha}\right)+\nabla\cdot\left[\frac{1}{2}m_{\alpha}n_{\alpha}\left\langle \boldsymbol{v}^{2}\boldsymbol{v}\right\rangle _{\alpha}\right]-n_{\alpha}\left\langle \boldsymbol{F}_{\alpha}\cdot\boldsymbol{v}\right\rangle _{\alpha}=0.\label{eq:proof-energy-s1}
\end{equation}
The second term on the right-hand side of Eq.\,(\ref{eq:proof-energy-s1})
can be written as
\begin{equation}
n_{\alpha}\left\langle \boldsymbol{F}_{\alpha}\cdot\boldsymbol{v}\right\rangle _{\alpha}=n_{\alpha}\left\langle \boldsymbol{F}_{\alpha}^{\left(1\right)}\cdot\boldsymbol{v}\right\rangle _{\alpha}=n_{\alpha}\boldsymbol{F}_{\alpha}^{\left(1\right)}\cdot\left\langle \boldsymbol{v}\right\rangle _{\alpha}=n_{\alpha}\boldsymbol{F}_{\alpha}^{\left(1\right)}\cdot\boldsymbol{V}_{\alpha}.\label{eq:proof-energy-s2}
\end{equation}
The second term on the right-hand side of Eq.\,(\ref{eq:proof-energy-s1})
can be transformed into 
\begin{align}
 & \nabla\cdot\left[\frac{1}{2}m_{\alpha}n_{\alpha}\left\langle \boldsymbol{v}^{2}\boldsymbol{v}\right\rangle _{\alpha}\right]=\nabla\cdot\left\{ \frac{1}{2}m_{\alpha}n_{\alpha}\left[\left\langle \left(\boldsymbol{V}_{\alpha}+\boldsymbol{w}_{\alpha}\right)^{2}\left(\boldsymbol{V}_{\alpha}+\boldsymbol{w}_{\alpha}\right)\right\rangle _{\alpha}\right]\right\} \nonumber \\
 & =\nabla\cdot\left[\frac{1}{2}m_{\alpha}n_{\alpha}\left[\left\langle \left(\boldsymbol{V}_{\alpha}^{2}+2\boldsymbol{w}_{\alpha}\cdot\boldsymbol{V}_{\alpha}+\boldsymbol{w}_{\alpha}^{2}\right)\boldsymbol{V}_{\alpha}\right\rangle _{\alpha}+\left\langle \left(\boldsymbol{V}_{\alpha}^{2}+2\boldsymbol{w}_{\alpha}\cdot\boldsymbol{V}_{\alpha}+\boldsymbol{w}_{\alpha}^{2}\right)\boldsymbol{w}_{\alpha}\right\rangle _{\alpha}\right]\right]\nonumber \\
 & =\nabla\cdot\left[\frac{1}{2}m_{\alpha}n_{\alpha}\left[\boldsymbol{V}_{\alpha}^{2}\boldsymbol{V}_{\alpha}+\left\langle \boldsymbol{w}_{\alpha}^{2}\right\rangle _{\alpha}\boldsymbol{V}_{\alpha}+2\left\langle \boldsymbol{w}_{\alpha}\boldsymbol{w}_{\alpha}\right\rangle _{\alpha}\cdot\boldsymbol{V}_{\alpha}+\left\langle \boldsymbol{w}_{\alpha}^{2}\boldsymbol{w}_{\alpha}\right\rangle _{\alpha}\right]\right]\nonumber \\
 & =\nabla\cdot\left[\left(\frac{1}{2}m_{\alpha}n_{\alpha}\boldsymbol{V}_{\alpha}^{2}\right)\boldsymbol{V}_{\alpha}+U_{\alpha}\boldsymbol{V}_{\alpha}+\boldsymbol{p}_{\alpha}\cdot\boldsymbol{V}_{\alpha}+\boldsymbol{q}_{\alpha}\right],\label{eq:proof-energy-s3}
\end{align}
where we used the relation
\begin{equation}
\left\langle \boldsymbol{w}_{\alpha}\cdot\boldsymbol{\Phi}\left(\boldsymbol{V}_{\alpha}\right)\right\rangle _{\alpha}=\boldsymbol{\Phi}\left(\boldsymbol{V}_{\alpha}\right)\cdot\left\langle \boldsymbol{w}_{\alpha}\right\rangle _{\alpha}=0,
\end{equation}
and where $\boldsymbol{\Phi}\left(\boldsymbol{V}_{\alpha}\right)$
is an arbitrary function of $\boldsymbol{V}_{\alpha}$. Substituting
Eqs.\,(\ref{eq:proof-energy-s2}) and (\ref{eq:proof-energy-s3})
into Eq.\,(\ref{eq:proof-energy-s1}), we have 
\begin{align}
 & \frac{\partial}{\partial t}\left(\frac{1}{2}m_{\alpha}n_{\alpha}\boldsymbol{V}_{\alpha}^{2}\right)+\nabla\cdot\left[\left(\frac{1}{2}m_{\alpha}n_{\alpha}\boldsymbol{V}_{\alpha}^{2}\right)\boldsymbol{V}_{\alpha}\right]+\nabla\cdot\left(\boldsymbol{p}_{\alpha}\cdot\boldsymbol{V}_{\alpha}\right)\nonumber \\
 & +\frac{\partial U_{\alpha}}{\partial t}+\nabla\cdot\left(U_{\alpha}\boldsymbol{V}_{\alpha}\right)+\nabla\cdot\boldsymbol{q}_{\alpha}=n_{\alpha}\boldsymbol{F}_{\alpha}^{\left(1\right)}\cdot\boldsymbol{V}_{\alpha}.\label{eq:proof-energy-s4}
\end{align}
The first three terms on the left-hand side of Eq.\,(\ref{eq:proof-energy-s4})
can be transformed into 
\begin{align}
 & \frac{\partial}{\partial t}\left(\frac{1}{2}m_{\alpha}n_{\alpha}\boldsymbol{V}_{\alpha}^{2}\right)+\nabla\cdot\left[\left(\frac{1}{2}m_{\alpha}n_{\alpha}\boldsymbol{V}_{\alpha}^{2}\right)\boldsymbol{V}_{\alpha}\right]+\nabla\cdot\left(\boldsymbol{p}_{\alpha}\cdot\boldsymbol{V}_{\alpha}\right)\nonumber \\
 & =\left(\frac{1}{2}m_{\alpha}\boldsymbol{V}_{\alpha}^{2}\frac{\partial n_{\alpha}}{\partial t}+m_{\alpha}n_{\alpha}\boldsymbol{V}_{\alpha}\cdot\frac{\partial\boldsymbol{V}_{\alpha}}{\partial t}\right)+n_{\alpha}\boldsymbol{V}_{\alpha}\cdot\nabla\left(\frac{1}{2}m_{\alpha}\boldsymbol{V}_{\alpha}^{2}\right)\nonumber \\
 & +\left(\frac{1}{2}m_{\alpha}\boldsymbol{V}_{\alpha}^{2}\right)\nabla\cdot\left(n_{\alpha}\boldsymbol{V}_{\alpha}\right)+\left(\nabla\cdot\boldsymbol{p}_{\alpha}\right)\cdot\boldsymbol{V}_{\alpha}+\boldsymbol{p}_{\alpha}:\nabla\boldsymbol{V}_{\alpha}\nonumber \\
 & =\left(m_{\alpha}n_{\alpha}\frac{\partial\boldsymbol{V}_{\alpha}}{\partial t}+m_{\alpha}n_{\alpha}\boldsymbol{V}_{\alpha}\cdot\nabla\boldsymbol{V}_{\alpha}\right)\cdot\boldsymbol{V}_{\alpha}+\left(\frac{1}{2}m_{\alpha}\boldsymbol{V}_{\alpha}^{2}\right)\nabla\cdot\left(n_{\alpha}\boldsymbol{V}_{\alpha}\right)\nonumber \\
 & +\left(\nabla\cdot\boldsymbol{p}_{\alpha}\right)\cdot\boldsymbol{V}_{\alpha}+\boldsymbol{p}_{\alpha}:\nabla\boldsymbol{V}_{\alpha}\nonumber \\
 & =\left[\frac{\partial}{\partial t}\left(m_{\alpha}n_{\alpha}\boldsymbol{V}_{\alpha}\right)+\nabla\cdot\left(m_{\alpha}n_{\alpha}\boldsymbol{V}_{\alpha}\boldsymbol{V}_{\alpha}\right)+\left(\nabla\cdot\boldsymbol{p}_{\alpha}\right)-m_{\alpha}\boldsymbol{V}_{\alpha}\left(\frac{\partial n_{\alpha}}{\partial t}+\nabla\cdot\left(n_{\alpha}\boldsymbol{V}_{\alpha}\right)\right)\right]\cdot\boldsymbol{V}_{\alpha}\nonumber \\
 & +\boldsymbol{p}_{\alpha}:\nabla\boldsymbol{V}_{\alpha}\nonumber \\
 & =\left(\boldsymbol{R}_{\alpha}+n_{\alpha}\boldsymbol{F}_{\alpha}\right)\cdot\boldsymbol{V}_{\alpha}+\boldsymbol{p}_{\alpha}:\nabla\boldsymbol{V}_{\alpha}=n_{\alpha}\boldsymbol{F}_{\alpha}^{\left(1\right)}\cdot\boldsymbol{V}_{\alpha}+\boldsymbol{p}_{\alpha}:\nabla\boldsymbol{V}_{\alpha},\label{eq:proof-energy-s5}
\end{align}
where we used the Eqs.\,(\ref{eq:Force1+2}), (\ref{eq:R dot V =00003D0}),
(\ref{eq:continuity-equation}) and (\ref{eq:momentum-equation})
in the last three steps. Substituting Eq.\,(\ref{eq:proof-energy-s5})
into Eq.\,(\ref{eq:proof-energy-s4}) we finally arrive at the energy
equation (\ref{eq:energy-equation}).

\section{Properties of $K_{m}^{\pm}\left(x\right)$ function \label{sec:Km(x) function}}

Using Eq.\,(\ref{eq:Km^(pm)}), $K_{m}^{\pm}\left(x\right)$ can
be expressed as 
\begin{align}
K_{m}^{+}\left(x\right) & =\int_{0}^{\pi}\left[1-\left(\frac{x-e^{i\varphi}}{x-e^{-i\varphi}}\right)^{m}\right]\mathrm{sin\varphi}d\varphi,\label{eq:Km(x)-+}\\
K_{m}^{-}\left(x\right) & =\int_{0}^{\pi}\left[1-\left(\frac{x+e^{i\varphi}}{x+e^{-i\varphi}}\right)^{m}\right]\mathrm{sin\varphi}d\varphi.\label{eq:Km(x)--}
\end{align}
Based on Eqs.\,(\ref{eq:Km(x)-+}) and (\ref{eq:Km(x)--}), we have
the following propositions regarding the function $K_{m}^{\pm}\left(x\right)$.
\begin{prop}
$\forall m\in\mathbb{Z}$ and $x\in\mathbb{R},$ we have
\begin{equation}
K_{m}^{\pm}\left(x\right)=\left[K_{-m}^{\pm}\left(x\right)\right]^{*}.
\end{equation}
\end{prop}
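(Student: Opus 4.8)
The plan is to reduce the identity to the single observation that, for real $x$, the ratio appearing inside the integrand has unit modulus, so that conjugating it is the same as inverting it. First I would set $g_{\pm}(\varphi)\equiv(x\pm e^{i\varphi})/(x\pm e^{-i\varphi})$, the common factor in Eqs.\,(\ref{eq:Km(x)-+}) and (\ref{eq:Km(x)--}). Since $x$ is real, the denominator is precisely the complex conjugate of the numerator, i.e. $(x\pm e^{i\varphi})^{*}=x\pm e^{-i\varphi}$, whence $|g_{\pm}(\varphi)|=1$ and therefore
\begin{equation}
g_{\pm}(\varphi)^{*}=g_{\pm}(\varphi)^{-1}\qquad\text{for all }\varphi.
\end{equation}

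With this in hand, I would take the complex conjugate of the defining integral. Because $x$, $\varphi$, $\sin\varphi$, the measure $d\varphi$, and the additive constant $1$ are all real, conjugation commutes with the integration and acts only on the power $g_{\pm}(\varphi)^{m}$, giving
\begin{equation}
\left[K_m^{\pm}(x)\right]^{*}=\int_0^{\pi}\left[1-\left(g_{\pm}(\varphi)^{*}\right)^{m}\right]\sin\varphi\,d\varphi=\int_0^{\pi}\left[1-g_{\pm}(\varphi)^{-m}\right]\sin\varphi\,d\varphi,
\end{equation}
where the last equality uses $g_{\pm}^{*}=g_{\pm}^{-1}$ together with $(g_{\pm}^{-1})^{m}=g_{\pm}^{-m}$, valid for every integer $m$ because $g_{\pm}(\varphi)\neq0$. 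The right-hand side is, by definition, exactly $K_{-m}^{\pm}(x)$, so $\left[K_m^{\pm}(x)\right]^{*}=K_{-m}^{\pm}(x)$; conjugating both sides yields the stated form $K_m^{\pm}(x)=\left[K_{-m}^{\pm}(x)\right]^{*}$. The $+$ and $-$ cases are dispatched simultaneously, since the sign in $x\pm e^{i\varphi}$ plays no role in the unit-modulus step.

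I do not expect a genuine obstacle here: the entire content is the conjugate-equals-reciprocal identity for a unit-modulus ratio. The only points meriting a line of care are that conjugation may be passed under the integral (immediate, as we integrate a real variable against a real measure) and that the power manipulation is legitimate for negative $m$ (which holds because $g_{\pm}$ is nonzero on the open interval, the only possible endpoint coincidences occurring at $\varphi\in\{0,\pi\}$ for $x=\pm1$, where the factor $\sin\varphi$ vanishes and so poses no difficulty). This proposition is precisely what underlies Eq.\,(\ref{eq:Km-K_m-conjugate-1}) quoted in the main text.
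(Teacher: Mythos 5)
Your proof is correct and takes essentially the same approach as the paper's: both arguments hinge on the fact that for real $x$ the ratio $(x\mp e^{i\varphi})/(x\mp e^{-i\varphi})$ has a denominator equal to the conjugate of its numerator, so conjugation inverts it and converts the power $m$ into $-m$ under the integral (you conjugate $K_{m}^{\pm}$ to obtain $K_{-m}^{\pm}$, the paper conjugates $K_{-m}^{\pm}$ to obtain $K_{m}^{\pm}$, which is the same identity). Your extra remark on the endpoint coincidences at $x=\pm1$, $\varphi\in\{0,\pi\}$ is a harmless refinement the paper's proof leaves implicit.
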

\begin{proof}
Using to Eqs.\,(\ref{eq:Km(x)-+}) and (\ref{eq:Km(x)--}), we can
directly calculate $\left[K_{-m}^{\pm}\left(x\right)\right]^{*}$
as follows
\begin{align}
 & \left[K_{-m}^{\pm}\left(x\right)\right]^{*}=\int_{0}^{\pi}\left[1-\left(\frac{x-\left(\pm\right)e^{i\varphi}}{x-\left(\pm\right)e^{-i\varphi}}\right)^{-m}\right]^{*}\mathrm{sin\varphi}d\varphi=\int_{0}^{\pi}\left[1-\left(\frac{x-\left(\pm\right)e^{-i\varphi}}{x-\left(\pm\right)e^{i\varphi}}\right)^{m}\right]^{*}\mathrm{sin\varphi}d\varphi\nonumber \\
 & =\int_{0}^{\pi}\left[1-\left(\frac{x-\left(\pm\right)e^{i\varphi}}{x-\left(\pm\right)e^{-i\varphi}}\right)^{m}\right]\mathrm{sin\varphi}d\varphi=K_{m}^{\pm}\left(x\right).
\end{align}
\end{proof}
\begin{prop}
$\forall m\in\mathbb{Z}$ and $x\in\mathbb{R},$ we have
\begin{equation}
\left[K_{m}^{+}\left(x\right)\right]^{*}=K_{m}^{-}\left(x\right).\label{eq:148}
\end{equation}
\end{prop}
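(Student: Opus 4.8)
The plan is to start from the definition of $K_m^+(x)$ in Eq.~(\ref{eq:Km(x)-+}), take its complex conjugate, and then perform a reflection substitution in the integration variable so as to land precisely on the definition of $K_m^-(x)$ in Eq.~(\ref{eq:Km(x)--}). Since $x$ and $\sin\varphi$ are real, conjugation acts only on the rational factor. First I would use $(e^{i\varphi})^{*}=e^{-i\varphi}$ to write
\begin{equation}
\left[K_m^+(x)\right]^{*}=\int_0^\pi\left[1-\left(\frac{x-e^{-i\varphi}}{x-e^{i\varphi}}\right)^{m}\right]\sin\varphi\,d\varphi .
\end{equation}

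The key step is the substitution $\varphi\to\pi-\varphi$, under which $e^{i\varphi}\mapsto-e^{-i\varphi}$ and $e^{-i\varphi}\mapsto-e^{i\varphi}$. Applied consistently to numerator and denominator, this sends the bracketed fraction to $\dfrac{x+e^{i\varphi}}{x+e^{-i\varphi}}$, which is exactly the rational factor appearing in $K_m^-(x)$. I would then verify that the measure is preserved: writing the new variable as $\psi=\pi-\varphi$ one has $d\varphi=-d\psi$, the endpoints $0$ and $\pi$ are interchanged so that the sign from $d\varphi$ cancels the reversed limits, and $\sin(\pi-\psi)=\sin\psi$; hence $\int_0^\pi g(\varphi)\sin\varphi\,d\varphi=\int_0^\pi g(\pi-\varphi)\sin\varphi\,d\varphi$. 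Feeding the conjugated integrand through this identity yields $K_m^-(x)$ directly, which completes the argument. This mirrors the conjugation technique already used for Proposition~1, with the reflection $\varphi\to\pi-\varphi$ replacing the trivial rearrangement used there.

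There is no serious obstacle here; the only point requiring care is confirming that the two reflections $e^{\pm i\varphi}\mapsto-e^{\mp i\varphi}$ are carried out consistently inside both numerator and denominator, so that the integer power $(\cdot)^{m}$ is simply the $m$-th power of the transformed ratio (this holds equally for negative $m$, so no case split in the sign of $m$ is needed). As a cross-check one may also note the purely algebraic identity $K_m^+(-x)=K_m^-(x)$, obtained by replacing $x$ with $-x$ in Eq.~(\ref{eq:Km(x)-+}), which is consistent with the relation being proved. I would not dwell on the boundary values $x=\pm1$, where a denominator vanishes at an endpoint, since these form a measure-zero set and the integrals are understood in the natural limiting sense.
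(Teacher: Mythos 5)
Your proposal is correct and matches the paper's own proof in essence: both rely on conjugation together with the reflection $\varphi\to\pi-\varphi$, which sends $e^{\pm i\varphi}$ to $-e^{\mp i\varphi}$ and converts the $K_{m}^{+}$ kernel into the $K_{m}^{-}$ kernel while preserving $\sin\varphi\,d\varphi$ on $[0,\pi]$. The only (immaterial) difference is the order of operations — the paper performs the substitution first (its Eq.~(\ref{eq:149})) and then conjugates, whereas you conjugate first and then substitute.
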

\begin{proof}
By introducing the new variable $\varphi'=-\varphi+\pi$, we can rewrite
Eq.\,(\ref{eq:Km(x)-+}) as
\begin{align}
 & K_{m}^{+}\left(x\right)=\int_{0}^{\pi}\left[1-\left(\frac{x-e^{i\left(-\varphi'+\pi\right)}}{x-e^{-i\left(-\varphi'+\pi\right)}}\right)^{m}\right]\mathrm{sin\left(-\varphi'+\pi\right)}d\left(-\varphi'+\pi\right)\nonumber \\
 & =-\int_{\pi}^{0}\left[1-\left(\frac{x+e^{-i\varphi'}}{x+e^{i\varphi'}}\right)^{m}\right]\mathrm{sin\varphi'}d\varphi'=\int_{0}^{\pi}\left[1-\left(\frac{x+e^{-i\varphi}}{x+e^{i\varphi}}\right)^{m}\right]\mathrm{sin\varphi}d\varphi.\label{eq:149}
\end{align}
Finally, using Eq.\,(\ref{eq:149}), we can easily prove Eq.\,(\ref{eq:148})
by directly calculating the conjugate.
\end{proof}
\begin{prop}
$\forall m\in\mathbb{Z}$ and $x\in\mathbb{R},$ we have

\begin{equation}
K_{m}^{\pm}\left(x\right)=\left[K_{m}^{\pm}\left(-x\right)\right]^{*}.\label{eq:151}
\end{equation}
\end{prop}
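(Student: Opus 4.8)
The plan is to exploit the simple algebraic effect of the substitution $x\to-x$ on the integrands in Eqs.~(\ref{eq:Km(x)-+}) and (\ref{eq:Km(x)--}), and then to combine it with the conjugation identity already established in Proposition~2 (Eq.~(\ref{eq:148})). The whole argument reduces to one observation plus one application of the previous proposition.

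First I would substitute $-x$ for $x$ directly in the definition of $K_m^{+}$. In the rational factor $\tfrac{-x-e^{i\varphi}}{-x-e^{-i\varphi}}$ a common factor of $-1$ cancels between numerator and denominator, leaving $\tfrac{x+e^{i\varphi}}{x+e^{-i\varphi}}$, which is exactly the integrand of $K_m^{-}(x)$. The identical computation applied to $K_m^{-}(-x)$ produces the integrand $\tfrac{x-e^{i\varphi}}{x-e^{-i\varphi}}$ of $K_m^{+}(x)$. Since $\sin\varphi$ and the limits of integration are untouched, this establishes the single relation
\[
K_m^{\pm}(-x)=K_m^{\mp}(x),\qquad \forall\,m\in\mathbb{Z},\ x\in\mathbb{R}.
\]
The sign superscript flips precisely because negating $x$ converts the $x-e^{\pm i\varphi}$ kernel into the $x+e^{\pm i\varphi}$ kernel, and conversely.

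Next I would take the complex conjugate of both sides and invoke Proposition~2, i.e.\ $[K_m^{+}(x)]^{*}=K_m^{-}(x)$, which by conjugating also gives $[K_m^{-}(x)]^{*}=K_m^{+}(x)$. Writing $[K_m^{\pm}(-x)]^{*}=[K_m^{\mp}(x)]^{*}$ and applying this identity turns the right-hand side into $K_m^{\pm}(x)$, which is exactly the claimed relation~(\ref{eq:151}).

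I do not expect a genuine obstacle: the entire content is the cancellation of the overall sign inside the rational kernel, and the conjugation step is borrowed wholesale from Proposition~2. The only point requiring slight care is the bookkeeping of the $\pm/\mp$ labels---one must track the sign flip induced by $x\to-x$ consistently through the conjugation so that the superscripts agree on both sides of the final identity.
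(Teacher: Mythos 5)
Your proof is correct, but it is organized differently from the paper's. The paper proves the identity by a single direct computation: it substitutes $-x$ into the definitions and conjugates, which produces the kernel $\bigl(x+(\pm)e^{-i\varphi}\bigr)/\bigl(x+(\pm)e^{i\varphi}\bigr)$, and then applies the change of variable $\varphi'=\pi-\varphi$ to restore the original integrand, never invoking Proposition 2. You instead factor the argument into two pieces: the purely algebraic parity relation $K_m^{\pm}(-x)=K_m^{\mp}(x)$, obtained by cancelling the overall sign in the rational kernel with no change of variables at all, followed by an appeal to the conjugation identity $[K_m^{+}(x)]^{*}=K_m^{-}(x)$ of Proposition 2 --- whose proof is where the $\varphi\to\pi-\varphi$ reflection actually lives. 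So the two routes rest on the same two ingredients (sign cancellation and the reflection symmetry of the integrand), but your decomposition is more modular: it makes explicit the clean intermediate relation $K_m^{\pm}(-x)=K_m^{\mp}(x)$, which the paper never states, and it avoids re-deriving the substitution already carried out in Proposition 2, at the price of depending on that proposition rather than being self-contained. There is no circularity, since the paper's proof of Proposition 2 makes no use of Proposition 3. The one bookkeeping point, which you handle correctly, is that conjugating Proposition 2 yields both $[K_m^{+}(x)]^{*}=K_m^{-}(x)$ and $[K_m^{-}(x)]^{*}=K_m^{+}(x)$, so that $[K_m^{\mp}(x)]^{*}=K_m^{\pm}(x)$, which closes the argument.
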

\begin{proof}
We first rewrite $\left[K_{m}^{\pm}\left(-x\right)\right]^{*}$ as
\begin{equation}
\left[K_{m}^{\pm}\left(-x\right)\right]^{*}=\int_{0}^{\pi}\left[1-\left(\frac{x+\left(\pm\right)e^{-i\varphi}}{x+\left(\pm\right)e^{i\varphi}}\right)^{m}\right]\mathrm{sin\varphi}d\varphi\label{eq:163}
\end{equation}
by using Eqs.\,(\ref{eq:Km(x)-+}) and (\ref{eq:Km(x)--}). By introducing
the new variable $\varphi'=-\varphi+\pi$, equation (\ref{eq:163})
can be transformed as 
\begin{align}
 & \left[K_{m}^{\pm}\left(-x\right)\right]^{*}=\int_{0}^{\pi}\left[1-\left(\frac{x+\left(\pm\right)e^{-i\left(-\varphi'+\pi\right)}}{x+\left(\pm\right)e^{i\left(-\varphi'+\pi\right)}}\right)^{m}\right]\mathrm{sin\left(-\varphi'+\pi\right)}d\left(-\varphi'+\pi\right)\nonumber \\
 & =-\int_{\pi}^{0}\left[1-\left(\frac{x-\left(\pm\right)e^{i\varphi'}}{x-\left(\pm\right)e^{-i\varphi'}}\right)^{m}\right]\mathrm{sin\varphi'}d\varphi'=\int_{0}^{\pi}\left[1-\left(\frac{x-\left(\pm\right)e^{i\varphi}}{x-\left(\pm\right)e^{-i\varphi}}\right)^{m}\right]\mathrm{sin\varphi}d\varphi\nonumber \\
 & =K_{m}^{\pm}\left(x\right).\label{eq:164}
\end{align}
\end{proof}
\begin{figure}
\includegraphics[scale=0.7]{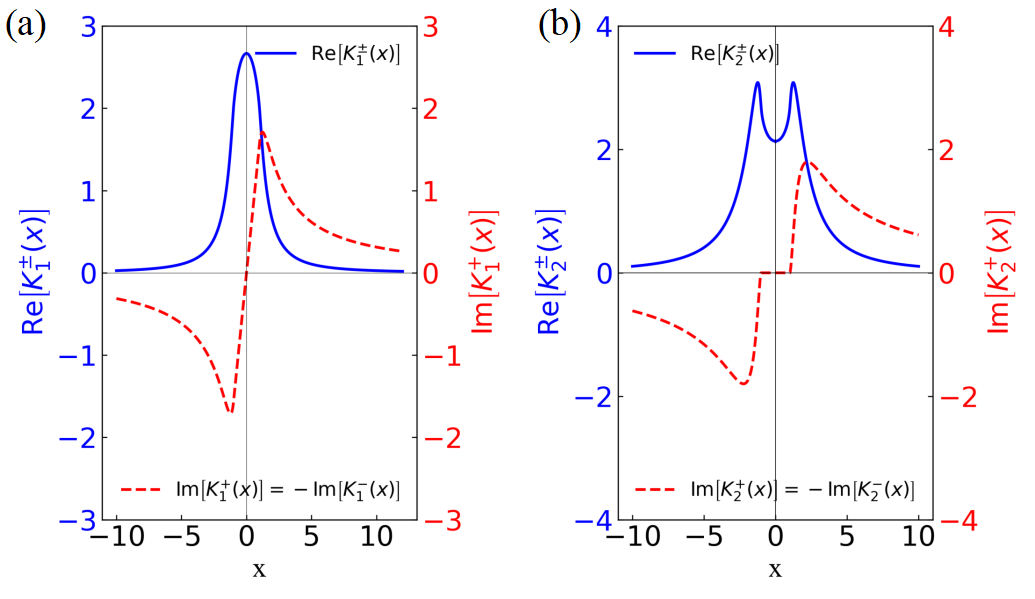}

\caption{\label{fig:K1,2(X)}Real and imaginary parts of the function $K_{1}^{\pm}\left(x\right)$
are shown in (a), and those of $K_{2}^{\pm}\left(x\right)$ are shown
in (b).}
\end{figure}

For $m=1$ and $2$, the function $K_{1}^{\pm}\left(x\right)$ and
$K_{2}^{\pm}\left(x\right)$ can be directly calculated in explicit
form as
\begin{align}
 & \mathrm{Re}\left[K_{1}^{+}\left(x\right)\right]=\mathrm{Re}\left[K_{1}^{-}\left(x\right)\right]=\frac{1}{2x^{3}}\left[2\left(x+x^{3}\right)-\left(x^{2}-1\right)^{2}\mathrm{ln}\left|\frac{x+1}{x-1}\right|\right]\\
 & \mathrm{Im}\left[K_{1}^{+}\left(x\right)\right]=-\mathrm{Im}\left[K_{1}^{-}\left(x\right)\right]=\begin{cases}
\frac{\pi}{2}x, & -1\leqslant x\leqslant1,\\
\frac{\pi}{2}\left(\frac{2}{x}-\frac{1}{x^{3}}\right), & \left|x\right|>1,
\end{cases}\\
 & \mathrm{Re}\left[K_{2}^{+}\left(x\right)\right]=\mathrm{Re}\left[K_{2}^{-}\left(x\right)\right]=\frac{1}{3x^{5}}\left[4x\left(5x^{2}-3\right)+6\left(x^{2}-1\right)^{2}\mathrm{ln}\left|\frac{x+1}{x-1}\right|\right],\\
 & \mathrm{Im}\left[K_{2}^{+}\left(x\right)\right]=-\mathrm{Im}\left[K_{2}^{-}\left(x\right)\right]=\begin{cases}
0, & -1\leqslant x\leqslant1,\\
\frac{2\pi}{x^{5}}\left(x^{2}-1\right)^{2}, & \left|x\right|>1.
\end{cases}
\end{align}
To compare the functions $K_{1}^{\pm}\left(x\right)$ and $K_{2}^{\pm}\left(x\right)$,
we plot their real and imaginary parts, as shown in Fig.\,\ref{fig:K1,2(X)}.
The real parts of both functions are even, while their imaginary parts
are odd, which corresponds to the properties described in Eq.\,(\ref{eq:151}).

\bibliographystyle{apsrev4-1}
\bibliography{Hall_MDA}

\end{document}